\newcommand{\eq}[1]{Eq.~\hyperref[eq:#1]{(\ref*{eq:#1})}}
\renewcommand{\sec}[1]{\hyperref[sec:#1]{Section~\ref*{sec:#1}}}
\newcommand{\app}[1]{\hyperref[app:#1]{Appendix~\ref*{app:#1}}}
\newcommand{\tab}[1]{\hyperref[tab:#1]{Table~\ref*{tab:#1}}}
\newcommand{\fig}[1]{\hyperref[fig:#1]{Figure~\ref*{fig:#1}}}
\newcommand{\figa}[2]{\hyperref[fig:#1]{Figure~\ref*{fig:#1}#2}}
\newcommand{\figx}[2]{\hyperref[fig:#1]{Figure~\ref*{fig:#1}(#2)}}
\newcommand{\thm}[1]{\hyperref[thm:#1]{Theorem~\ref*{thm:#1}}}
\newcommand{\lem}[1]{\hyperref[lem:#1]{Lemma~\ref*{lem:#1}}}
\newcommand{\cor}[1]{\hyperref[cor:#1]{Corollary~\ref*{cor:#1}}}
\newcommand{\defn}[1]{\hyperref[def:#1]{Definition~\ref*{def:#1}}}
\newcommand{\alg}[1]{\hyperref[alg:#1]{Algorithm~\ref*{alg:#1}}}
\def\bra#1{\mathinner{\langle{#1}|}}
\def\ket#1{\mathinner{|{#1}\rangle}}
\newcommand{\braket}[2]{\langle #1|#2\rangle}
\newcommand{\be}{\begin{equation}}
\newcommand{\ee}{\end{equation}}
\newcommand{\ba}{\begin{eqnarray}}
\newcommand{\ea}{\end{eqnarray}}
\newcommand{\nn}{\nonumber \\}
\newcommand{\select}[1]{\textrm{select}(#1)}
\newtheorem{theorem}{Theorem}
\newtheorem{lemma}[theorem]{Lemma}
\newtheorem{corollary}[theorem]{Corollary}
\newenvironment{proofof}[1]{\begin{trivlist}\item[]{\flushleft\it
Proof of~#1.}}
{\qed\end{trivlist}}
\begin{document}

\title{Bounding the costs of quantum simulation of many-body physics in real space}

\date{\today}
\author{Ian D. Kivlichan}
\email{ian.kivlichan@gmail.com}
\affiliation{Department of Chemistry and Chemical Biology, Harvard University, Cambridge, MA 02138}
\affiliation{Department of Physics, Harvard University, Cambridge, MA 02138, USA}
\author{Nathan Wiebe}
\email{nawiebe@microsoft.com}
\affiliation{Station Q Quantum Architectures and Computation Group, Microsoft Research, Redmond, WA 98052, USA}
\author{Ryan Babbush}
\email{babbush@google.com}
\affiliation{Google Inc., Venice, CA 90291, USA}
\author{Al\'{a}n Aspuru-Guzik}
\email{aspuru@chemistry.harvard.edu}
\affiliation{Department of Chemistry and Chemical Biology, Harvard University, Cambridge, MA 02138}

\begin{abstract}
We present a quantum algorithm for simulating the dynamics of a first-quantized Hamiltonian in real space based on the truncated Taylor series algorithm.
We avoid the possibility of singularities by applying various cutoffs to the system and using a high-order finite difference approximation to the kinetic energy operator.
We find that our algorithm can simulate $\eta$ interacting particles using a number of calculations of the pairwise interactions that scales, for a fixed spatial grid spacing, as $\tilde{O}(\eta^2)$, versus the $\tilde{O}(\eta^5)$ time required by previous methods (assuming the number of orbitals is proportional to $\eta$), and scales super-polynomially better with the error tolerance than algorithms based on the Lie-Trotter-Suzuki product formula.  
Finally, we analyze discretization errors that arise from the spatial grid and show that under some circumstances these errors can remove the exponential speedups typically afforded by quantum simulation.  
\end{abstract}
\maketitle

\section{Introduction}
\label{sec:intro}

Simulation of quantum systems was one of the first applications of quantum computers, proposed by Manin \cite{manin1980computable} and Feynman \cite{feynman1982simulating} in the early 1980s. Using the Lie-Trotter-Suzuki product formula \cite{suzuki1991general}, Lloyd demonstrated the feasibility of this proposal in 1996 \cite{lloyd1996universal}; since then a variety of quantum algorithms for quantum simulation have been developed~\cite{aharonov2003adiabatic,childs2004quantum,berry2007efficient,childs2010on,wiebe2010higher,childs2011simulating,poulin2011quantum,wiebe2011simulating,berry2012black}, with applications ranging from quantum chemistry to  quantum field theories to spin models \cite{lidar1999calculating,aspuru2005simulated,jordan2012quantum,lasheras2014digital}. 

Until recently, all quantum algorithms for quantum simulation were based on implementing the time-evolution operator as a product of unitary operators, as in Lloyd's work using the Lie-Trotter-Suzuki product formula. A different avenue that has become popular recently is the idea of deviating from such product formulas and instead using linear combinations of unitary matrices to simulate time evolution~\cite{childs2012hamiltonian,berry2014exponential,berry2015simulating,low2016hamiltonian}. 
This strategy has led to improved algorithms that have query complexity sublogarithmic in the desired precision, which is not only super-polynomially better than any previous algorithm but also optimal.

Of the many methods proposed so far, we focus here on the BCCKS algorithm which employs a truncated Taylor series to simulate quantum dynamics~\cite{berry2015simulating}.  The algorithm has been applied to yield new algorithms for several problems, including linear systems \cite{childs2015quantum}, Gibbs sampling \cite{chowdhury2016quantum}, and simulating quantum chemistry in second quantization \cite{babbush2016exponentially} as well as in the configuration interaction representation \cite{babbush2015exponentially}. For this reason, it has become a mainstay method in quantum simulation and beyond.

The algorithms in~\cite{babbush2016exponentially,babbush2015exponentially} build on a body of work on the simulation of quantum chemistry using quantum computers: following the introduction of Aspuru-Guzik \textit{et al.}'s original algorithm for quantum simulation of chemistry in second quantization \cite{aspuru2005simulated}, Wecker \textit{et al.} \cite{wecker2014gate} determined the first estimates on the gate count required for it; these estimates were reduced through a better understanding of the errors involved and their origins, and the algorithm improved in several subsequent papers \cite{hastings2015improving,poulin2015trotter,mcclean2014exploiting,babbush2015chemical,reiher2016elucidating}. All these papers focused on second-quantized simulation: only a handful have considered the problem of simulating chemistry or physics in position space.

The reason for this is that second-quantized simulations require very few (logical) qubits.  Important molecules, such as ferredoxin or nitrogenase, responsible for energy transport in photosynthesis and nitrogen fixation, could be studied on a quantum computer using on the order of $100$ qubits using such methods.  Simulations that are of great value both scientifically and industrially could be simulated using a small quantum computer using these methods.  By contrast, even if only $32$ bits of precision are used to store each coordinate of the position of an electron in a position space simulation then methods such as \cite{wiesner1996simulations,zalka1998simulating,zalka1998efficient,boghosian1998simulating} would require $96$ qubits just to store the position of a single electron.  Since existing quantum computers typically have fewer than $20$ qubits, such simulations have garnered much less attention because they are challenging to perform in existing hardware.

However, there are several potential  advantages to position space simulations.  Most notably, these methods potentially require fewer gates than second-quantized methods.  In particular,  
Kassal \textit{et al.} found that position space simulation is both more efficient and accurate for systems with more than $\sim\!4$ atoms.  This is also important because the more gates an algorithm needs, the more physical qubits it requires to implement in hardware.  This potentially allows position space simulation to have advantages in space, time, and accuracy over second-quantized simulations once fault-tolerant quantum computing comes of age~\cite{jones2012faster}.  For these reasons, research has begun to delve more deeply into such simulations in recent years.

Recent work by Somma has investigated the simulation of harmonic and quartic oscillators using Lie-Trotter-Suzuki formulas \cite{somma2015quantum,somma2016trotter}.  This work highlights the challenges faced when trying to go beyond previous work using recent linear combination-based approaches \cite{berry2014exponential,low2016hamiltonian}, because the complexity of such methods depends on the norm of the Hamiltonian, which is unbounded.  This work highlights the fact that going beyond the Lie-Trotter-Suzuki formalism for these continuous simulations, as well as simulations of the Coulomb Hamiltonian, is not a straightforward extension of previous work.

However, a subject not considered in past literature is the errors incurred by discretizing a continuous system into a uniform mesh, a prerequisite for existing such quantum simulation algorithms. We revisit the encoding of Wiesner and Zalka \cite{wiesner1996simulations,zalka1998simulating,zalka1998efficient} as used in Lidar and Wang as well as Kassal \textit{et al.}'s works, conducting a rigorous analysis of the errors involved in discretizing the wave function of a many-particle system, and determining the number of grid points necessary for simulation to arbitrary precision. We find that these errors scale exponentially in the number of particles in the worst case, give an example of a wave function with this worst-case scaling, and finally discuss which cases we expect to be simpler. 
Further, we present an algorithm for position space quantum simulation of interacting particles using the BCCKS truncated Taylor series method \cite{berry2015simulating}. Our algorithm uses arbitrary high-order finite difference formulae~\cite{li2005general} to suppress the errors in the kinetic energy operator with only polynomial complexity, which seems challenging for approaches based on the quantum Fourier transform.

This paper is structured as follows. In \sec{results} we outline our results. We review the BCCKS Taylor series algorithm in \sec{simulation}. \sec{approximating} details the approximations to the Hamiltonian which we make, including imposing bounds on the potential energy and its derivatives, as well as the high-order finite difference approximation to the kinetic energy operator. In \sec{applying}, we discuss the problem of applying terms from the decomposition of the Hamiltonian into a linear combination of unitary operators. \sec{evolving} presents the complexity of evolving under the Hamiltonian. Finally, in \sec{errors} we discuss the various errors incurred by discretizing a continuous system (under various assumptions) and what is required to control them.

\section{Summary of results}
\label{sec:results}
Here we focus on simulating the dynamics of systems that have a fixed number of particles $\eta$ in $D$ dimensions, interacting through a spatially varying potential energy function $V(x):\mathbb{R}^{\eta D} \mapsto \mathbb{R}$.  We further assume that the simulation is performed on a bounded hypertorus: $x\in [0,L]^{\eta D}$.  In practice the assumption of periodic boundary conditions is just to simplify the construction of our approximate Hamiltonian, and non-periodic boundary conditions can be simulated by choosing the space $[0,L]^{\eta D}$ to be appropriately larger than the dynamically accessible space for the simulation.

Under the above assumptions, we can express the Hamiltonian for the continuous system as
\be
H = T + V,
\ee
where $T = -\sum_{i} \frac{\nabla_i^2}{2m_i}$ is the usual kinetic energy operator and $V=V(x)$ is some position-dependent potential energy operator, with $m_i$ the mass of the $i^\text{th}$ particle and $\nabla_i^2 = \sum_n \frac{\partial^2}{\partial {x_{i,n}}^2}$ for $x\in [0,L]^{\eta D}$. 
$i$ indexes the $\eta$ particles, and $n$ indexes the $D$ dimensions. 
We begin with the definition of the finite difference approximation \cite{li2005general} to the kinetic energy operator.  The kinetic operator in the Hamiltonian is not bounded, which means that simulation is under most circumstances impossible without further approximation.  We address this by discretizing the space and defining a discrete kinetic operator as follows.
\begin{restatable}{definition}{finitedifferenceapprox}
\label{def:dcoeffs}
Let $S_{i, n}$ be the centered finite difference approximation of order $2a$ and spacing $h$ to the kinetic energy operator for the $i^\text{th}$ particle along the $n^\text{th}$ dimension, and let $\tilde T = \sum_{i, n} \left(S_{i, n} - \frac{d_{ 2a + 1, j = 0 }}{2m_i h^2} \mathds{1}\right)$, where $d_{ 2a+1, j = 0 } = -\sum_{j=-a, j\neq 0}^a d_{ 2a+1, j }$, with
\be
\label{eq:dcoeffs}
d_{ 2a+1, j\neq0 } = \frac{2 (-1)^{a + j + 1} (a!)^2 }{(a + j)! (a - j)! j^2}.
\ee
\end{restatable}
Our kinetic energy operator differs from the usual discretized operator by a term proportional to the identity, $D \sum_i \frac{d_{ 2a + 1, j = 0 }}{2m_i h^2} \mathds{1}$.  Since the identity commutes with the remainder of the Hamiltonian, it does not lead to an observable difference in the dynamics and we thus neglect it from the simulation.  In cases where the user wishes to compute characteristic energies for the system, this term can be classically computed and added to the final result after the simulation.  

In order to make this process tractable on a quantum computer, we make a further discretization of the space into a mesh of hypercubes and assume that the value of the wave function is constant within each hypercube.  We take this mesh to be uniform for simplicity and assume that each of the $D$ spatial dimensions is discretized into $b$ points.  We further define the side length of the hypercubes to be $h:=L/b$.

\begin{restatable}{definition}{grid}
\label{def:grid}
Let $S=[0,L]^{\eta D}$ and let $\{\mathcal{D}_j:j=1,\dotsc,b^{\eta D}\}$ be a set of hypercubes that comprise a uniform mesh of $S$, let $\{y_j: j=1,\dotsc, b^{\eta d}\}$ be their centroids, and let $y:x\mapsto {\rm argmin}_{u\in \{y_j\}}\|x-u\|$ if the argmin is unique and define it to be the minimum index of the $y_j$ terms in the argmin if it is not.
We then define the discretized Hamiltonian via
\begin{enumerate}
\item $\tilde{V}:\mathbb{R}^{\eta D}\mapsto \mathbb{R}$ is defined such that $\tilde{V}(x)=V(y(x))$.
\item $\tilde{H}:=\tilde{T} + \tilde{V}$.
\end{enumerate}
\fig{centroid_grid} illustrates the hypercubes $\{\mathcal{D}_j\}$ and their centroids $\{y_j\}$ for a single particle with $b=5$ bins in $D=2$ dimensions.
\end{restatable}

\begin{figure}[t]
\begin{center}
\includegraphics[width=0.47\textwidth, trim={3cm 0.95cm 3cm 3.125cm},clip]{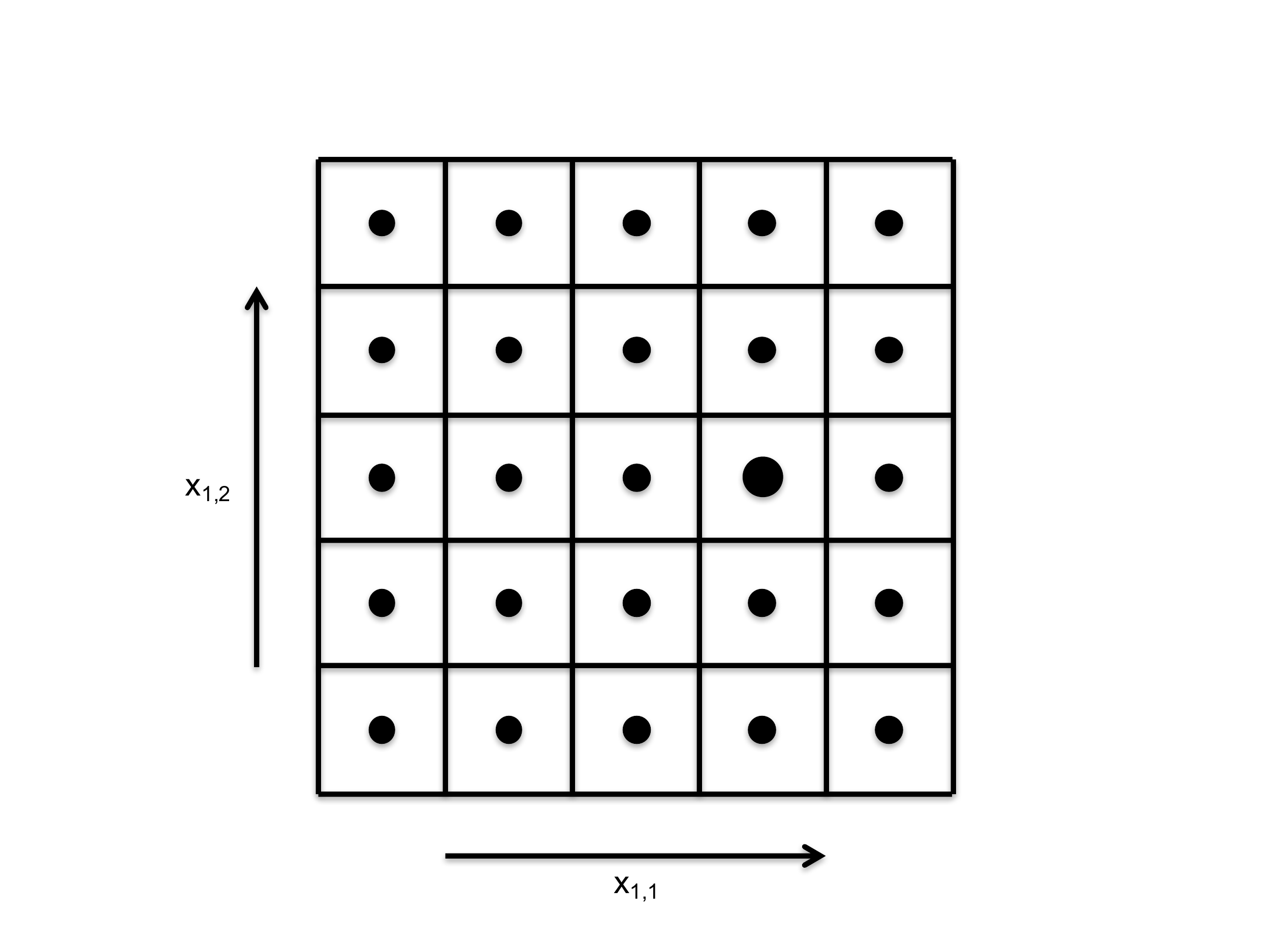}
\caption{The grid with $b=5$ bins in $D=2$ dimensions for a single particle. Each bin has side length $h = L / b$, where $L$ is the side length of the entire grid. The centroid of each bin is the dot at its centre. A particle in the bin corresponding to the larger dot would be represented by $\ket{x} = \ket{x_{1,1}} \ket{x_{1,2}} = \ket{011}\ket{010}$, indicating that $x_{1}$ is in the bin third from the left and second from the bottom, taking the bottom-left bin as $(0,0)$.}
\label{fig:centroid_grid}
\end{center}
\end{figure} 

The computational model that we use to analyze the Hamiltonian evolution is an oracle query model wherein we assume a universal quantum computer that has only two costly operations.  The first operation is the computation of the potential energy $\tilde{V}(x)$, which we cost at one query.  Furthermore, we will express our kinetic operator, approximated using finite difference formulas, as a sum of unitary adders. As such, we take the cost of applying one adder to the state to also be one query.  All other resources, including initial state preparation, are assumed to be free in this analysis.

With these definitions in hand we can state our main theorem, which provides an upper bound on the complexity of simulating such a discrete system (in a finite-dimensional Hilbert space) using the BCCKS Taylor series technique:
\begin{restatable}[Discrete simulation]{theorem}{querybound}
\label{thm:querybound}
Let $V$ be some position-dependent potential energy operator such that its max norm, $\| V(x) \|_{\infty}$, is bounded by $V_{\max}$, let $\tilde H$ be the discretized $\eta$-particle Hamiltonian in \defn{grid} with the potential energy operator $\tilde V(x) = V(y(x))$, and let $m$ be the minimum mass of any particle. We can simulate time-evolution under $\tilde H$ of the discretized wave function $\psi(y(x))$, $e^{-i\tilde H t} \psi(y(x))$, for time $t > 0$ within error $\epsilon > 0$ with 
$$O\left( \left( \frac{\eta D}{mh^2} + V_{\max}\right)t \left[\frac{\log\left( \frac{\eta Dt}{mh^2 \epsilon} + \frac{V_{\max}t}{\epsilon}\right) }{\log\left(\log\left( \frac{\eta Dt}{mh^2 \epsilon} + \frac{V_{\max}t}{\epsilon}\right)\right)} \right]\right)$$
 unitary adders and queries to an oracle for the potential energy.
\end{restatable}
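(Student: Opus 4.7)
The strategy is to exhibit $\tilde H = \tilde T + \tilde V$ as a linear combination of unitaries (LCU) with small total weight and then plug into the BCCKS truncated Taylor series simulator recalled in \sec{simulation}. That simulator evolves $e^{-iHt}$ to within error $\epsilon$ using
$$O\!\left(\lambda t \cdot \frac{\log(\lambda t/\epsilon)}{\log\log(\lambda t/\epsilon)}\right)$$
queries to the PREPARE and SELECT oracles of a decomposition $H = \sum_\ell \alpha_\ell U_\ell$ with $\lambda = \|\alpha\|_1$. The whole game is therefore to show $\lambda = O\!\left(\eta D/(m h^2) + V_{\max}\right)$ and to verify that each SELECT invocation corresponds to either one adder call or one potential oracle call.

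For the kinetic piece, \defn{dcoeffs} already expresses $\tilde T$ as $\sum_{i,n}\sum_{j=-a,\,j\neq 0}^{a} \frac{d_{2a+1,j}}{2 m_i h^2} U_{i,n,j}$, where $U_{i,n,j}$ is the cyclic shift-by-$j$ on the grid coordinate of particle $i$ in dimension $n$; each such unitary is realized by exactly one adder. The crucial elementary bound is that $\sum_{j\neq 0}|d_{2a+1,j}|$ is bounded uniformly in $a$: from \eq{dcoeffs},
\be
|d_{2a+1,j}| \;=\; \frac{2}{j^2}\cdot\frac{\binom{2a}{a-j}}{\binom{2a}{a}} \;\le\; \frac{2}{j^2},
\ee
since the central binomial coefficient is maximal, and hence $\sum_{j\neq 0}|d_{2a+1,j}| \le 2\sum_{j\ge 1} 2/j^2 = O(1)$. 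Summing over $i$ and $n$ and using $1/m_i \le 1/m$ gives a kinetic weight of $\|\alpha_T\|_1 = O(\eta D/(m h^2))$.

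For the potential piece, $\tilde V$ is diagonal in the position basis with $\|\tilde V\|_\infty \le V_{\max}$, so $\tilde V/V_{\max}$ has spectrum in $[-1,1]$ and we may write $\tilde V = (V_{\max}/2)(W_+ + W_-)$ with $W_\pm := e^{\pm i\arccos(\tilde V/V_{\max})}$. Each $W_\pm$ is diagonal in position and can be applied with $O(1)$ queries to the oracle $x \mapsto \tilde V(x) = V(y(x))$ of \defn{grid} (compute, apply the controlled phase, uncompute). This contributes weight $\|\alpha_V\|_1 = V_{\max}$.

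Combining the two pieces gives $\lambda = \|\alpha_T\|_1 + \|\alpha_V\|_1 = O\!\left(\eta D/(m h^2) + V_{\max}\right)$, and $\lambda t/\epsilon = \Theta\!\left(\frac{\eta D t}{m h^2 \epsilon} + \frac{V_{\max} t}{\epsilon}\right)$. Substituting into the BCCKS query bound produces exactly the expression in the theorem; since each SELECT invocation is either one adder or one potential-oracle query, the query accounting matches the theorem's stated model. The one nontrivial analytic step is the uniform-in-$a$ summability of the finite-difference coefficients — once that is nailed down, the rest is bookkeeping around a carefully split LCU and an off-the-shelf invocation of BCCKS.
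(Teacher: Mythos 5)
Your proposal is correct in outline and follows the same architecture as the paper's proof: express $\tilde H$ as a linear combination of unitary adders and potential-oracle-implementable diagonal unitaries, bound the $1$-norm of the coefficients using the uniform-in-$a$ summability of the finite-difference coefficients, and feed the result into the BCCKS segment/truncation bounds. Your coefficient estimate $|d_{2a+1,j}|\le 2/j^2$ via the central binomial coefficient is the same bound as \lem{dcoeffbound}, just phrased differently. Where you genuinely depart from the paper is the potential: you decompose $\tilde V$ \emph{exactly} into two diagonal unitaries $e^{\pm i\arccos(\tilde V/V_{\max})}$, each implementable with $O(1)$ oracle queries by compute--phase--uncompute, whereas the paper (\lem{selectVerrbd}) approximates $M\tilde V$ by $M\ge V_{\max}/\delta$ signature matrices, which forces it to work with the rescaled Hamiltonian $M\tilde H$ for time $t/M$ and to track an extra approximation error $\delta\in O(\epsilon/t)$ through the triangle inequality. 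Your two-term decomposition removes that bookkeeping entirely (the only residual imprecision is cost-free arithmetic in this query model) and yields the same weight $\lambda\in O\bigl(\eta D/(mh^2)+V_{\max}\bigr)$, hence the stated complexity; this is a cleaner route to the same bound.

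The one point you assert rather than verify is that each invocation of $\select{V}$ costs $\Theta(1)$ queries. The kinetic part of your LCU has $2\eta D a$ distinct terms (one shift $A_j$ for each particle $i$, dimension $n$, and offset $j$), and a naive controlled selection among them would cost $O(\eta D a)$ adder calls per use of $\select{V}$, which would spoil the theorem by reintroducing dependence on $a$ and on $\eta D$ beyond the $\lambda t$ factor. The paper's proof devotes most of its effort to exactly this step: a network of controlled swaps of depth $O(\log \eta D)$ (free in the cost model) routes the $(i,n)$ coordinate register to a fixed location, a \emph{single} adder then adds the value $j$ read from the index register, and the swaps are undone; an analogous argument gives one oracle query for the diagonal part. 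Without this (or an equivalent) construction, the claim that ``each SELECT invocation corresponds to either one adder call or one potential oracle call'' is unsupported, so you should spell it out rather than treat it as bookkeeping; with it, your argument goes through.
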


We will discuss a version of the Coulomb potential modified such that it is bounded, $V_\text{Coulomb} = \sum_{i<j} \frac{q_i q_j}{\sqrt{\| x_i - x_j \|^2 + \Delta^2}}$, where $\Delta$ determines the maximum of the potential. For this potential, we can simulate discretized evolution under $\tilde H$ within error $\epsilon$ with
$$O\left( \left( \frac{\eta D}{mh^2} + \frac{\eta^2 q^2}{\Delta} \right)t\left[ \frac{\log\left( \frac{\eta Dt}{mh^2 \epsilon} + \frac{\eta^2 q^2 t}{\Delta\epsilon}\right) }{\log\left(\log\left( \frac{\eta Dt}{mh^2 \epsilon} + \frac{\eta^2 q^2 t}{\Delta\epsilon}\right)\right)}\right] \right)$$
unitary adders and queries to an oracle for potential energies, where $q$ is the maximum absolute charge of any particle.

This shows that if $h\in \omega(\eta^{-1})$ 
 then such a simulation can be performed in time that scales better with $\eta$ than the best known quantum simulation schemes in chemistry applications, for fixed filling fraction.  However, this does not directly address the question of how small $h$ will have to be to provide good accuracy.  The answer that we find is, in worst-case scenarios, that the value of $h$ needed can be exponentially small in the number of particles and can scale linearly with the error tolerance.  This is summarized in the following theorem.

\begin{restatable}[Discretizing continuous simulation]{theorem}{totalab}
\label{thm:totalab}
Let $V$ and $\tilde H$ be as in \thm{querybound} with the following additional assumptions:
\begin{enumerate}
\item the max norms of the derivatives of $V$, $\| \nabla V(x) \|_{\infty}$, are bounded by $V_{\max}^\prime$,
\item let $\psi(k):\mathbb{R}^{\eta D} \mapsto \mathbb{C}$ and $\psi(x):\mathbb{R}^{\eta D} \mapsto \mathbb{C}$ be conjugate momentum and position representations of the same $\eta$-particle wave function such that $e^{-iHs}\psi(k)$ and $e^{-i\tilde Hs}\psi(k)$ are zero if $\| k \|_{\infty} >k_{\rm max}$ for all $s\in [0,t]$,
\item $\psi(x)$ is smooth at all times during the evolution,

\item $k_{\max} L > \pi(2 e^{-1/3})^{2/\eta D}$.
\end{enumerate} Then for any square integrable wave function $\phi:S\mapsto \mathbb{C}$, we can simulate evolution for time $t > 0$ with the simulation error $\left| \int_S \phi^*(x) e^{-i H t} \psi(x) \,\mathrm{d}^{\eta D}x -\int_S \phi^*(x) e^{-i \tilde{H} t} \frac{\psi(y(x))}{\int_S | \psi(y(x)) |^2 \,\mathrm{d}^{\eta D}x } \,\mathrm{d}^{\eta D}x \right| \le \epsilon$ by choosing
$$
h \le \frac{2\epsilon}{3\eta D\left(k_{\max} +V'_{\max}t \right)}\left(\frac{k_{\max} L }{\pi} \right)^{-\eta D/2},
$$
and using the $(2a+1)$--order divided difference formula in~\defn{dcoeffs} where $$a\in O\left(\eta D \log(k_{\max} L) + \log\left(\frac{\eta^2 D^2 t k_{\max} (k_{\max} +V'_{\max}t)}{m\epsilon^2} \right) \right).$$
\end{restatable}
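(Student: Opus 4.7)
The plan is to bound the matrix-element error by the $L^2$ distance $\|e^{-iHt}\psi - e^{-i\tilde H t}\psi_{\mathrm{disc}}\|_2$ via Cauchy--Schwarz (taking $\phi$ of unit norm without loss of generality), and then split it by triangle inequality into a Hamiltonian-discretization term $\|(e^{-iHt}-e^{-i\tilde H t})\psi\|_2$ and an initial-state term $\|\psi - \psi_{\mathrm{disc}}\|_2$, where $\psi_{\mathrm{disc}}$ denotes the appropriately normalized piecewise-constant version of $\psi$. Both quantities will be controlled in $L^2$ and then expressed in terms of $h$ and $a$.

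For the Hamiltonian-discretization piece I will apply Duhamel's identity to reduce to $\int_0^t \|(H-\tilde H)\psi_s\|_2\,ds$ with $\psi_s := e^{-iHs}\psi$, and split $H - \tilde H = (T-\tilde T) + (V-\tilde V)$. The potential piece follows from the mean value theorem: $\|V-\tilde V\|_\infty \le V'_{\max}\,h\sqrt{\eta D}/2$ since $\|x - y(x)\|_2\le h\sqrt{\eta D}/2$, and multiplying by $t$ gives the time-integrated bound. For the kinetic piece I will pass to the Fourier representation on the torus, where $T-\tilde T$ is diagonal; the coefficients $d_{2a+1,j}$ of \defn{dcoeffs} are constructed so that the finite-difference symbol matches $-k^2$ through order $h^{2a}$ in $kh$, leaving a mode-wise residual of order $h^{2a}k^{2a+2}/(m(2a+2)!)$ up to a combinatorial prefactor. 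Assumption 2 guarantees that $\psi_s$ retains momentum support in $\{\|k\|_\infty\le k_{\max}\}$ for all $s\in[0,t]$, so summing over $\eta D$ directions and integrating in time gives a kinetic contribution that is driven below $\epsilon$ by choosing $a$ logarithmic in the relevant quantities.

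The wave-function-discretization term uses assumption 2 at the initial time. Expanding $\psi$ in the Fourier basis of the torus of side $L$, at most $\sim(k_{\max}L/\pi)^{\eta D}$ modes are non-vanishing; Cauchy--Schwarz then yields $\|\psi\|_\infty \le L^{-\eta D/2}(k_{\max}L/\pi)^{\eta D/2}$ and $\|\nabla\psi\|_\infty\le k_{\max}\sqrt{\eta D}\,\|\psi\|_\infty$. The pointwise Taylor estimate $|\psi(x)-\psi(y(x))|\le (h\sqrt{\eta D}/2)\|\nabla\psi\|_\infty$, integrated over the volume $L^{\eta D}$, gives $\|\psi-\psi(y(\cdot))\|_2 \le (h\eta D/2)\,k_{\max}(k_{\max}L/\pi)^{\eta D/2}$; the normalization correction $|\|\psi(y(\cdot))\|_2-1|$ is bounded by the same quantity via the reverse triangle inequality and is therefore a higher-order term. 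Combining all pieces, using assumption 4 to ensure $(k_{\max}L/\pi)^{\eta D/2}\ge 1$ so that the potential and wave-function contributions combine neatly into the factor $(k_{\max}+V'_{\max}t)(k_{\max}L/\pi)^{\eta D/2}$, and requiring the total error to be $\le\epsilon$ yields the stated choices of $h$ and $a$.

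The principal technical obstacle is the Fourier-side analysis of the finite-difference symbol error: one must carefully exploit the explicit form of $d_{2a+1,j}$ in \eq{dcoeffs} both to extract the $(2a+2)!$ in the denominator and to control the combinatorial prefactor, which together are what allow $a$ to scale only as $\eta D\log(k_{\max}L)+\log(\epsilon^{-1}\cdot\mathrm{poly})$ rather than, say, linearly in $1/\epsilon$. A secondary subtlety is that Duhamel's identity requires the momentum cutoff on $\psi_s$ at every intermediate time $s\in[0,t]$, not only at $s=0$; assumption 2 supplies this, but it must be invoked inside the integral rather than as a single initial-time condition.
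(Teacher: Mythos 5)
Your proposal is correct in substance but follows a genuinely different route from the paper's proof. The paper works pointwise: it bounds $\sup_x|\psi^{(r)}(x)|$ from the momentum cutoff (\lem{err1}), feeds this into the real-space remainder bound for the $(2a+1)$-point formula (\thm{errbd}, via \lem{dcoeffbound}), propagates the operator error through the evolution with a Box-4.1-style argument (\lem{HHtildeErr}), bounds the replacement of $\psi(x)$ by $\psi(y(x))$ pointwise (\lem{psitilde}), treats normalization with a midpoint-rule estimate (\lem{normalization}), and only at the end converts to inner-product error by Cauchy--Schwarz, picking up the $L^{\eta D/2}$ volume factor; consequently every term carries the $(k_{\max}L/\pi)^{\eta D/2}$ factor. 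You instead reduce immediately to an $L^2$ distance, use unitarity of $e^{-i\tilde H t}$ plus Duhamel, and analyze $T-\tilde T$ as a Fourier multiplier: since the shifts defining $\tilde T$ are diagonal in the Fourier basis of the torus, the kinetic error is controlled by the worst symbol error over $\|k\|_\infty\le k_{\max}$ with no $(k_{\max}L/\pi)^{\eta D/2}$ factor, and likewise $\|(V-\tilde V)\psi_s\|_2\le\|V-\tilde V\|_\infty$; the exponential factor then enters only through the state-discretization and normalization terms, where it is unavoidable. Your multiplier estimate needs the same combinatorial input the paper uses, namely $\sum_j|d_{2a+1,j}|\le 2\pi^2/3$ and $|j|\le a$, which after Stirling reproduce the $e^{2a(1-\ln 2)}$ prefactor of \thm{errbd}, and, as you correctly note, Assumption 2 must be invoked at every intermediate time inside the Duhamel integral; with those pieces your bounds are term-by-term no larger than the paper's, so the stated $h$ and $a$ remain sufficient (your route would in fact tolerate a somewhat smaller $a$, since the $\eta D\log(k_{\max}L)$ contribution in the paper comes from the exponential factor it carries in the kinetic term). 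Your normalization treatment via the reverse triangle inequality, $\bigl|\,\|\psi(y(\cdot))\|_2-1\bigr|\le\|\psi-\psi(y(\cdot))\|_2$, is simpler than \lem{normalization}, though the correction is of the \emph{same} order as the state-discretization term rather than higher order; it doubles that contribution, which the $2/3$ slack in the stated $h$ absorbs. Two small repairs: with $\|\nabla V(x)\|_\infty\le V'_{\max}$ the mean-value bound is $|V(x)-V(y(x))|\le\tfrac{h\sqrt{\eta D}}{2}\cdot\sqrt{\eta D}\,V'_{\max}=\tfrac{h\eta D}{2}V'_{\max}$, not $\tfrac{h\sqrt{\eta D}}{2}V'_{\max}$ (harmless, since the stated $h$ already carries the full $\eta D$), and the tail of the symbol series beyond the leading $(kh)^{2a+2}$ term must be summed, which requires $hk_{\max}$ bounded; the stated $h$ together with Assumption 4 guarantees this.
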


The modified Coulomb potential satisfies $V_{\max}^\prime \le \frac{\eta^2 q^2 \sqrt{3}}{9\Delta^2}$. With this potential,
\begin{equation}
h \le \frac{2\epsilon}{3\eta D\left(k_{\max} +\frac{\eta^2 q^2t \sqrt{3}}{9\Delta^2} \right)}\left(\frac{k_{\max} L }{\pi} \right)^{-\eta D/2}.
\end{equation}

Since the simulation scales as $O(h^{-2})$, the fact that $h \in O^\star((k_{\max} L)^{-\eta D/2})$ suggests that, without further assumptions on the initial state and the Hamiltonian, the complexity of the simulation given by~\thm{totalab} may be exponential in $\eta D$.  We further show in~\sec{errors} that this scaling is tight in that there are valid quantum states that saturate it.  This indicates that there are important caveats that need to be considered before one is able to conclude, for example, that a position space simulation will be faster than a second-quantized simulation.  However, it is important to note that such problems also implicitly exist with second-quantized simulations, or in other schemes such as configuration interaction, but are typically dealt with using intelligent choices of basis functions. Our work suggests that such optimizations may be necessary to make quantum simulations of certain continuous-variable systems practical.

One slightly stronger assumption to consider is a stricter bound on the derivatives of the wave function. In \thm{totalab}, we assumed only a maximum momentum. \cor{optimistic} determines the value of $h$ necessary when we assume that $| \psi^{(r)}(x) | \in O( k_{\max}^r / (\sqrt{2r+1}L^{\eta D/2})) )$. This assumption means that the wave function can never become strongly localized: it must at all times take a constant value over a large fraction of $S$. While this assumed scaling of the derivative of the wave function of $k_{\max}^r$ may seem pessimistic at first glance, it is in fact saturated for plane waves.  Furthermore, physical arguments based on the exponential scaling of the density of states given from Kato's theorem suggest that such scaling may also occur in low energy states of dilute electron gases.  Regardless, we expect such scalings to be common and show below that this does not lead to the exponential scaling predicted by~\thm{totalab}.

\begin{restatable}[Discretization with bounded derivatives]{corollary}{optimistic}
\label{cor:optimistic}
Assume in addition to Assumptions $1$--$3$ of \thm{totalab} that, at all times, $| \psi^{(r)}(x) | \le \beta k_{\max}^r / (\sqrt{2r+1} L^{\eta D/2}))$ for any non-negative integer $r$ where $\beta \in \Theta(1)$ and $hk_{\max} < e^{1/3}$. Then for any square integrable wave function $\phi:S\mapsto \mathbb{C}$, we can simulate evolution for time $t > 0$ with the simulation error at most $\epsilon$ by choosing
$$h \in O\left( \frac{\epsilon}{\eta D\left(k_{\max} +V'_{\max}t \right)}\right),$$
and 
$$a\in O\left(\log\left(\frac{\eta^2 D^2 t k_{\max} (k_{\max} +V'_{\max}t)}{m\epsilon^2} \right) \right).$$

\end{restatable}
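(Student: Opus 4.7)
The plan is to reuse the three-term error decomposition from the proof of \thm{totalab}, replacing only the steps that control the magnitude of the derivatives of $\psi$ by the much tighter pointwise hypothesis in assumption~4. The total simulation error splits into (i) the wave-function discretization error from replacing $\psi(x)$ by the piecewise-constant $\psi(y(x))$ (followed by renormalization), (ii) the potential discretization error from replacing $V$ by $\tilde V=V\circ y$, and (iii) the finite-difference error from replacing $-\nabla_i^2/(2m_i)$ by the order-$2a$ stencil of \defn{dcoeffs}. Parts (i) and (iii) are exactly where the proof of \thm{totalab} invokes a Plancherel estimate to bound $|\psi^{(r)}|$ in terms of the momentum cutoff, producing the worst-case factor $(k_{\max}L/\pi)^{\eta D/2}$. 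Swapping that estimate for the direct hypothesis $|\psi^{(r)}(x)|\le \beta k_{\max}^r/(\sqrt{2r+1}\,L^{\eta D/2})$ is what eliminates this factor and yields the cleaner scalings in the corollary.

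For (i) and (ii) I would apply the mean value theorem coordinatewise. Since $x$ lies within $h/2$ of its centroid $y(x)$ in every coordinate, $|\psi(x)-\psi(y(x))|\le \tfrac{\eta D h}{2}\|\nabla\psi\|_\infty$ and $|V(x)-\tilde V(x)|\le \tfrac{\eta D h}{2}V'_{\max}$. Plugging in assumption~4 with $r=1$ gives $\|\nabla\psi\|_\infty\le \beta k_{\max}/(\sqrt{3}\,L^{\eta D/2})$; squaring and integrating over $S$ absorbs the $L^{-\eta D/2}$ against the integration volume, leaving an $L^2$ wave-function error of order $\eta D h k_{\max}$, which survives renormalization with the same order. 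The potential error is pushed through the same Duhamel argument as in \thm{totalab} to give an observable error of order $V'_{\max}\,\eta D h t$. Requiring each to be $\le \epsilon/3$ gives the advertised $h\in O(\epsilon/(\eta D(k_{\max}+V'_{\max}t)))$.

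The main work is part (iii). The centered order-$2a$ stencil has a standard Taylor remainder bounding its pointwise error by a constant times $h^{2a}|\psi^{(2a+2)}(\xi)|/(2a+1)!$ after using the explicit $d_{2a+1,j}$ from \defn{dcoeffs} together with a Stirling bound on the coefficient sizes. Substituting assumption~4 at $r=2a+2$ yields the pointwise bound
\begin{equation}
\le \frac{C\beta}{m\,L^{\eta D/2}}\,\frac{(h k_{\max})^{2a+2}}{\sqrt{4a+5}\,(2a+1)!}.
\end{equation}
Squaring and integrating over $S$ cancels $L^{-\eta D/2}$ against the volume, summing over the $\eta D$ coordinates gives a factor $\eta D$, and Duhamel multiplies by $t$; Stirling then turns $(h k_{\max})^{2a}/(2a)!$ into $(e\,h k_{\max}/(2a))^{2a}$, which is geometric in $a$ precisely because $h k_{\max}<e^{1/3}$ is assumed. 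Demanding this be $\le \epsilon/3$ and solving for $a$ yields $a\in O(\log(\eta^2 D^2 t k_{\max}(k_{\max}+V'_{\max}t)/(m\epsilon^2)))$.

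The hard part will be carefully tracking the constants through the renormalization of $\psi(y(\cdot))$ and through the Duhamel bound so that the three $\epsilon/3$ budgets genuinely combine to a global error $\le \epsilon$, and verifying that no residual factor of $L$ or of $k_{\max}L$ sneaks back in from the $L^2$-to-observable conversion. Once that bookkeeping is settled, the corollary follows at once by choosing $h$ and $a$ in the stated asymptotic classes.
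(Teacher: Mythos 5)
Your plan is essentially the paper's own proof: the paper obtains the corollary by rerunning the proof of \thm{totalab} with every use of the worst-case derivative bound of \lem{err1} replaced by the assumed bound, so each factor $(k_{\max}L/\pi)^{\eta D/2}$ becomes $\beta$, the normalization, potential-discretization, and finite-difference terms are budgeted at $\epsilon/3$ apiece, and the $\eta D\log(k_{\max}L)$ contribution to $a$ vanishes --- exactly your three-term decomposition. The only cosmetic differences are that the paper keeps the remainder form of \thm{errbd} (error $\propto h^{2a-1}\max|\psi^{(2a+1)}|$) rather than your $(2a+2)$-derivative form, and your displayed stencil bound drops a factor of $h^{-2}$ (it should scale as $(hk_{\max})^{2a}k_{\max}^2$ rather than $(hk_{\max})^{2a+2}$), neither of which affects the stated scalings for $h$ and $a$.
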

Thus even if the derivatives of the wave function are guaranteed to be modest then our bounds show that the cost of performing the simulation such that the error, as defined via the inner product in~\thm{totalab}, can be made arbitrarily small using a polynomial number of queries to the potential operator using low-order difference formulas.  If we apply this method to simulate chemistry then the number of calls to an oracle that computes the pairwise potential (assuming $D$, $k_{\max}$ and $\Delta$ are fixed) scales as $\tilde{O}(\eta^7 t^3 / \epsilon^2)$.  This scaling is worse than the $\tilde{O}(\eta^5 t\log(1/\epsilon))$ scaling that has been demonstrated for methods using a basis in first (assuming the number of orbitals is proportional to $\eta$) or second quantization \cite{babbush2016exponentially,babbush2015exponentially}, which may cause one to question whether these methods actually have advantages for chemistry over existing methods.

When drawing conclusions in comparing these results, it is important to consider what error metric is being used.  To this end,~\thm{querybound} and~\thm{totalab} and~\cor{optimistic} use very different measures of the error (as seen in~\fig{error_map}).  The first strictly examines the error between the simulated system within the basis and the exact evolution that we would see within that basis.  
The latter two interpret the state within the quantum computer as a coarse-grained state in the infinite-dimensional simulation, and measure the error to be the maximum difference in any inner product that could be measured in the higher-dimensional space. This means that the $\tilde{O}(\eta^7 t^3/\epsilon^2)$ scaling should not be directly compared to the $\tilde{O}(\eta^5 t\log(1/\epsilon))$ scaling seen in existing algorithms because the latter does not explicitly consider the error incurred by representing the problem in a discrete basis.

\begin{figure}[t!]
\includegraphics[width=0.4\linewidth]{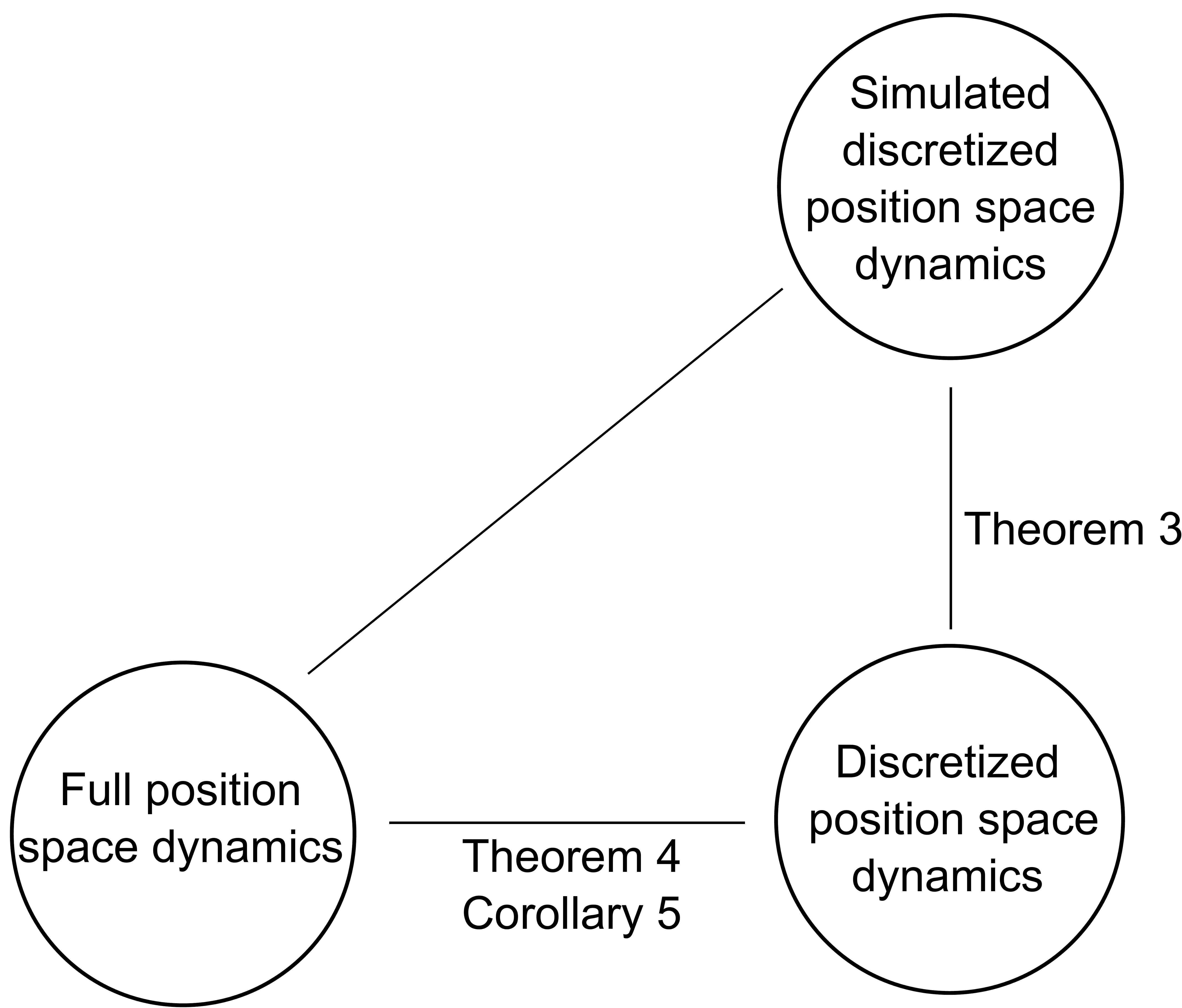}
\caption{An illustration showing the three different dynamical systems considered in this paper. The lines represent the errors incurred by the two approximations necessary for discretely simulating continuous dynamics, and are labeled by the theorems that bound them.  The overall error simulation error can be found through the use of the triangle inequality as illustrated in the figure.  Most previous results only discuss errors between the simulation and the discretized dynamics, which we bound in~\thm{querybound}.\label{fig:error_map}}
\end{figure}
 
Also, it is important to stress that~\thm{totalab} and~\cor{optimistic} bound a different sort of error than that usually considered in the quantum simulation literature.  In our setting, we assume a fixed spatial grid and allow the user to prepare an arbitrary initial state (modulo the promises made above about that state) and then discuss how badly the error can scale.  Most simulations deviate from this approach because the user typically picks a basis that is tailored to not only the state but also the observable that they want to measure.  Typical measurements include,
for example, estimation of bond lengths or eigenvalues of the Hamiltonian.  Unlike the wavefunction overlaps considered in our theorems, such quantities are not necessarily very sensitive to the number of hypercubes in the mesh.  This means that while these scalings are daunting, they need not imply that such simulations will not be practical.  Rather, they suggest that the costs of such simulations will depend strongly on the nature of the information that one wishes to extract and on the promises made about the behavior of the system.

\section{Hamiltonian simulation}
\label{sec:simulation}

To exploit the Taylor series simulation techniques, we must be able to approximate the Hamiltonian $H$ by a linear combination of easily-applied unitary operators, that is, as 
\be
\label{eq:decomp}
H \approx \sum_\chi d_\chi V_\chi,
\ee
where each $V_\chi$ is unitary, and $d_\chi > 0$. Later, we will bound the error in this approximation. As in \defn{grid}, we will work with the Hamiltonian represented in position space
\be
H = T + V,
\ee
where $T = -\sum_{i} \frac{\nabla_i^2}{2m_i}$ is the kinetic energy operator and $V=V(x)$ is the potential energy operator, with $\nabla_i^2 = \sum_n \frac{\partial^2}{\partial {x_{i,n}}^2}$. Our goal, then, is to decompose this Hamiltonian as in \eq{decomp}, into a linear combination of easily-applied unitary operators that approximates the original Hamiltonian. One way of doing this is by decomposing it into a linear combination of 1-sparse unitary operators (unitary operators with only a single nonzero entry in each row or column); however, the unitary operators need not be 1-sparse in general.

Because the potential energy operator $V=V(x)$ is diagonal in the position basis, we can decompose it into a sum of diagonal unitary operators which can be efficiently applied to arbitrary precision. The single-particle Laplacians $\nabla_i^2$ in the kinetic energy operator are more difficult: we will consider a decomposition of the kinetic energy operator, approximated using finite difference formulas, into a linear combination of unitary adder circuits. By decomposing the kinetic and potential energy operators into a linear combination of unitary operators to sufficient precision, we can decompose the Hamiltonian into unitary operators to any desired precision. The following section details how we decompose the potential and kinetic energy operators into a linear combination of unitary operators.

Once we have decomposed the Hamiltonian into a linear combination of unitary operators which can be easily applied, as in \eq{decomp}, we employ the BCCKS truncated Taylor series method for simulating Hamiltonian dynamics \cite{berry2015simulating}. We wish to simulate evolution under the Hamiltonian $H$ for time $t > 0$, that is, to approximately apply the operator
\be
U(t) = \exp(-i H t)
\ee
with error less than $\epsilon > 0$. We divide the evolution time $t$ into $r$ segments of length $t / r$, and so require error less than $\epsilon / r$ for each segment.

The key result of Ref.~\cite{berry2015simulating} is that time evolution in each segment can be approximated to within error $\epsilon / r$ by a truncated Taylor series, as
\be
\label{eq:tts}
U(t / r) = \exp(-i H t / r) \approx \sum_{k=0}^K \frac{1}{k!} (-iHt/r)^k,
\ee
where, provided that we choose $ \|H\| t$, we can take~\cite{berry2015simulating}
\be
K \in O\left(\frac{\log(r / \epsilon)}{\log\log(r / \epsilon)}\right).
\ee
Expanding \eq{tts} using the form of the Hamiltonian in \eq{decomp}, we find that
\be
\label{eq:tts_chiform}
U(t / r) \approx \sum_{k=0}^K \sum_{\chi_1, \dotsc, \chi_k} \frac{(-it / r)^k}{k!} d_{\chi_1} \dotsm d_{\chi_k} V_{\chi_1} \dotsm V_{\chi_k}.
\ee
The sum for each $\chi_i$ is over all the terms in the decomposition of the Hamiltonian in \eq{decomp}. We collect the real coefficients in the sum into one variable, $c_\alpha = (t / r)^k  / k! \prod_{k^\prime=1}^k d_{\chi_{k^\prime}}$, and the products of the unitary operators into a unitary operator $W_\alpha = (-i)^k \prod_{k^\prime=1}^k V_{\chi_{k^\prime}}$, where the multi-index $\alpha$ is
\be
\label{eq:alpha}
\alpha = (k, \chi_1, \chi_1, \dotsc, \chi_k).
\ee
We can then rewrite our approximation for $U(t / r)$ as
\be
U(t / r) \approx \sum_\alpha c_\alpha W_\alpha = W(t / r).
\ee
Since each $V_\chi$ can be easily applied, so too can each $W_\alpha$, which are products of at most $K$ operators $V_\chi$. In \sec{evolving}, we will give the circuit for an operator $\textrm{select}(W)$ such that for any state $\ket\psi$ and for any ancilla state $\ket{\alpha}$,
\be
\label{eq:selectW}
\textrm{select}(W) \ket{\alpha} \ket\psi = \ket{\alpha} W_\alpha \ket\psi.
\ee

Given a circuit for applying the operators $W_\alpha$, we can apply the approximate unitary for a single segment $W(t / r)$ using oblivious amplitude amplification \cite{berry2015simulating}.

First, we use a unitary $B$ which we define by its action on the ancilla zero state:
\be
\label{eq:B}
B \ket0 = \frac{1}{\sqrt{c}} \sum_\alpha \sqrt{c_\alpha} \ket{\alpha},
\ee
where $c = \sum_\alpha c_\alpha$ is the normalization constant $\sqrt{c}$ squared. Following this, we apply $\textrm{select}(W)$, and finally apply $B^\dagger$. Let us group these three operators into a new operator $A$, whose action on $\ket{0} \ket\psi$ is
\be
A \ket{0} \ket\psi = \frac{1}{c} \ket{0} W(t/r) \ket\psi + \sqrt{1 - \frac{1}{c^2}} \ket\phi,
\ee
where $\ket\phi$ is some state with the ancilla orthogonal to $\ket{0}$.

The desired state $W(t/r) \ket\psi$ can be ``extracted'' from this superposition using oblivious amplitude amplification~\cite{berry2014exponential,berry2015simulating}. When we allow for the fact that $W(t/r)$ may be slightly nonunitary, there are two conditions which must be satisfied in order to bound the error in oblivious amplitude amplification to $O(\epsilon / r)$ \cite{berry2014exponential}: first, we must have that $|c - 2| \in O(\epsilon / r)$, and second, we must have that \be
\label{eq:UWdiff}
\| U(t/r) - W(t/r) \| \in O(\epsilon / r),
\ee
where here and in the remainder of the paper we take $\|\cdot \|$ to be the induced $2$--norm or spectral norm.
The first condition can be satisfied by an appropriate choice of $r$, and the second is satisfied by our earlier choice $K \in O\left(\frac{\log(r / \epsilon)}{\log\log(r / \epsilon)}\right)$.

In oblivious amplitude amplification, by alternating the application of $A$ and $A^\dagger$ with the operator $R = \mathds1 - 2 P_0$ which reflects across the ancilla zero state (where $P_0$ is the projection operator onto the zero ancilla state), we construct the operator
\be
G = -A R A^\dagger R A,
\ee
which, given that $|c - 2| \in O(\epsilon / r)$ and $\| U(t/r) - W(t/r) \| \in O(\epsilon / r)$, satisfies
\be
\| P_0 G \ket{0} \ket\psi - \ket{0} U(t/r) \ket\psi \| \in O(\epsilon / r).
\ee

Thus, we can approximate evolution under the Hamiltonian $H$ for time $t / r$ with accuracy $O(\epsilon / r)$ by initializing the ancilla for each segment in the zero state, applying $P_0 G$, and discarding the ancilla. By repeating this process $r$ times, we can approximate evolution under the Hamiltonian for time $t$ with accuracy $O(\epsilon)$.

\section{Approximating the Hamiltonian}
\label{sec:approximating}

In this section, we present the approximation of the continuous Hamiltonian $H$ which we will decompose into a sum of unitary operators. We apply one approximation to the potential energy operator and two to the kinetic energy operator. To the potential energy operator $V$, we impose a cutoff on the potential energy between two particles. For the kinetic energy operator we assume a maximum momentum $k_{\max}$, and also approximate the kinetic energy operator $T$ by a sum of high-order finite difference formulas for each particle and dimension. These approximations hold for both finite- and infinite-dimensional Hilbert spaces. We focus only on the discretized finite-dimensional case because we must ultimately discretize to determine the cost of a circuit that approximates evolution under the discretized Hamiltonian $\tilde H$ in \sec{applying}.

Throughout, we employ a discrete position-basis encoding of the $\eta$-particle wave function $\psi(y(x))$. The position of each particle is encoded in $D$ registers specifying the $D$ components of that particle's position in a uniformly spaced grid of side length $L$. Each spatial direction is discretized into $b$ bins of side length $h = L / b$. We represent the stored position of particle $i$ in the $n^\text{th}$ dimension by $\ket{x_{i,n}}$, and use $\ket{x}$ to represent the combined register storing the positions of all $\eta$ particles. Each of the coordinate registers $\ket{x_{i,n}}$ is composed of $\lceil \log b \rceil$ qubits indexing which of the $b$ bins the particle is in. As such $\ket{x}$ is composed of $\eta D \lceil \log b \rceil$ qubits.

\subsection{The potential energy operator}

We first discuss the approximation to the potential energy operator $V = V(x)$. This approximation affects $V(x)$ directly, and its discretized counterpart $\tilde V(x) = V(y(x))$ through \defn{grid}. We wish to decompose the potential energy operator into a sum of unitary operators approximating $V$. Because the potential energy operator is diagonal in the position basis, this decomposition is relatively straightforward. One simple way of approximating it as a sum of unitary operators is by writing it as a sum of signature matrices, that is, diagonal matrices whose elements are $+1$ or $-1$. This requires a number of signature matrices equal to the maximum possible norm of the potential energy operator.

At this stage, the potential energy operator is unbounded, so we would need infinitely many signature matrices in the sum. 
To prevent infinities, we replace potentials of the form $\| x_i - x_j \|^{-k}$ with $1/(\sqrt{\| x_i - x_j \|^{2}+\Delta^2})^k$, with $\Delta > 0$. For example, rather than the usual Coulomb potential $\sum_{i<j} \frac{q_i q_j}{\| x_i - x_j \|}$, where $x_i$ and $q_i$ are the $D$-dimensional position and charge of the $i^\text{th}$ particle, respectively,
we instead use
\be
\label{eq:simppot}
V_\text{Coulomb} = \sum_{i<j} \frac{q_i q_j}{\sqrt{\| x_i - x_j \|^2 + \Delta^2}}.
\ee
Let $q = \max_i |q_i|$. The modified Coulomb potential energy operator is bounded by
\be
\label{eq:simppotmax}
\| V_\text{Coulomb}(x) \|_{\infty} \le \frac{\eta (\eta - 1) q^2}{2 \Delta}.
\ee
In general, we will denote the maximum value of the potential by $V_{\max}$. This means that we can approximate any bounded potential energy operator by a sum of $V_{\max}$ signature matrices. The modified Coulomb potential energy operator, for example, can be approximated by a sum of $\left\lceil \frac{\eta (\eta - 1) q^2}{2 \Delta} \right\rceil$ signature matrices. However, the error in this approximation is constant (specifically, it is at most $1$) and cannot be better controlled. We will address this issue when we discuss simulation in \sec{applying}.

\subsection{The kinetic energy operator}

In the previous subsection, we considered the problem of applying a cutoff to the potential energy operator to ensure that its norm is bounded. The kinetic energy operator has a similar problem in that its norm is not finite. Additionally, while the potential energy operator is diagonal in the position basis, the kinetic energy operator is not. This further complicates the problem of decomposing the kinetic energy operator into a linear combination of easily-applied unitary operators.

We address these issues with two simplifications. First, we approximate the kinetic energy operator using arbitrary high-order central difference formulas for the second derivative \cite{li2005general}. Second, we work only with wave functions with a maximum momentum $k_{\max}$ such that $\psi(k) = 0$ if $\| k \|_{\infty} \ge k_{\max}$. These two simplifications are linked, and, after determining a bound on the sum of the norms of the finite difference coefficients in \lem{dcoeffbound}, we will use that bound together with the momentum cutoff to bound the error incurred by the finite difference approximation in \thm{errbd}. 

We numerically approximate the Laplacian using a $(2a+1)$-point central difference formula for the second derivative of the $i^\text{th}$ particle's position in each dimension $n$. The $(2a+1)$-point central difference formula for a single such coordinate is \cite{li2005general}
\begin{equation}\begin{aligned}
\label{eq:fdf}
\partial^2_{in} \psi(x)  =& h^{-2} \sum_{j=-a}^a d_{ 2a+1, j } \psi(x + j h \hat e_{i,n} ) + O_{2a+1},
\end{aligned}
\end{equation}
where $\hat e_{i,n}$ is the unit vector along the $(i, n)$ component of $x$, $(x_{i,n} + j h \hat e_{i,n} )$ is evaluated modulo the grid length $L$, and 
\be
\label{eq:dcoeffs}
d_{ 2a+1, j\neq0 } = \frac{2 (-1)^{a + j + 1} (a!)^2 }{(a + j)! (a - j)! j^2}.
\ee
The $j=0$ coefficient is the opposite of the sum of the others, $d_{ 2a+1, j = 0 } = -\sum_{j=-a, j\neq 0}^a d_{ 2a+1, j }$.

Surprisingly, the sum of the norms of the finite difference coefficients $d_{ 2a+1, j\neq0 }$ is bounded by a constant. We prove this fact below, and then use it to bound the error term $O_{2a+1}$.
\begin{lemma}
\label{lem:dcoeffbound}
The sum of the norms of the coefficients $d_{ 2a+1, j\neq0 }$ in the $(2a+1)$-point central finite difference formula is bounded above by $\frac{2 }{3}\pi^2$ for $a\in \mathbb{Z}_{+}$.
\end{lemma}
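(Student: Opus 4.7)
The plan is to reduce the bound to the classical Basel sum $\sum_{j=1}^{\infty} j^{-2} = \pi^2/6$ by first showing that the combinatorial factor in $d_{2a+1,j}$ is at most $1$.

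\medskip

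Taking absolute values in the definition,
\[
|d_{2a+1,j}| \;=\; \frac{2\,(a!)^2}{(a+j)!\,(a-j)!\,j^2} \;=\; \frac{2}{j^2}\cdot\frac{(a!)^2}{(a+j)!\,(a-j)!},
\]
so the heart of the matter is to bound the ratio $R_{a,j}:=(a!)^2/[(a+j)!\,(a-j)!]$ for $1\le |j|\le a$. My first step is to observe that $R_{a,j}$ can be rewritten as a ratio of central binomial coefficients:
\[
R_{a,j}\;=\;\binom{2a}{a+j}\Big/\binom{2a}{a}.
\]
Since the central binomial coefficient $\binom{2a}{a+j}$ is unimodal in $j$ and maximized at $j=0$, we immediately get $R_{a,j}\le 1$ for all $|j|\le a$. (If one prefers to avoid the binomial-coefficient detour, the same inequality follows directly by induction on $|j|$: going from $j$ to $j+1$ multiplies $(a+j)!(a-j)!$ by $(a+j+1)/(a-j)\ge 1$ for $0\le j<a$.)

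\medskip

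Combining these two facts, $|d_{2a+1,j}| \le 2/j^2$ for every $j\ne 0$ with $|j|\le a$. Summing,
\[
\sum_{\substack{j=-a\\j\ne 0}}^{a} |d_{2a+1,j}| \;\le\; \sum_{\substack{j=-a\\j\ne 0}}^{a} \frac{2}{j^2} \;=\; 4\sum_{j=1}^{a}\frac{1}{j^2} \;\le\; 4\sum_{j=1}^{\infty}\frac{1}{j^2} \;=\; 4\cdot\frac{\pi^2}{6} \;=\; \frac{2\pi^2}{3},
\]
which is the desired bound. The whole argument is short; the only mildly non-obvious ingredient is the elementary inequality $R_{a,j}\le 1$, and even that is transparent once one recognizes the binomial-coefficient reformulation. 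There is no real obstacle, but if a reader balks at the unimodality statement, the one-line inductive verification mentioned above can be spelled out explicitly.
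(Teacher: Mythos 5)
Your proposal is correct and follows essentially the same route as the paper: bound the factorial ratio $(a!)^2/[(a+j)!\,(a-j)!]$ by $1$ so that $|d_{2a+1,j}|\le 2/j^2$, then extend the sum over $j\neq 0$ and use $\sum_{j\ge 1}j^{-2}=\pi^2/6$ to obtain $2\pi^2/3$. The extra justification you give for the ratio bound (binomial unimodality or a one-line induction) is just a more explicit version of the same elementary fact the paper invokes.
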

\begin{proof}
The sum of the norms of the coefficients is
$$
\begin{aligned}
\sum_{j=-a, j \neq 0}^{a} \left| d_{ 2a + 1, j } \right| &=  \sum_{j=-a, j \neq 0}^{a}  \frac{2 (a!)^2 }{(a+j)! (a-j)! j^2} \\
&< \sum_{j=-\infty, j \neq 0}^\infty \frac{2 }{j^2} \\
&= \frac{2 \pi^2}{3},
\end{aligned}
$$
where in the second step we used the fact that $(a + j)! (a - j)! \ge (a!)^2$ for $|j| \le a$ when $a \ge 1$, and extended the sum over $j$ up to infinity.
\end{proof}

\begin{theorem}\label{thm:errbd}
Let $\psi(x)\in \mathbb{C}^{2a+1}$ on $x\in \mathbb{R}$ for $a\in \mathbb{Z}_{+}$. Then the error in the $(2a+1)$-point centered difference formula for the second derivative of $\psi(x)$ evaluated on a uniform mesh with spacing $h$ is at most
\begin{equation}
| O_{2a+1} | \le \frac{\pi^{3/2}}{9} e^{2a[ 1- \ln 2]} h^{2a-1} \max_x \left| \psi^{(2a+1)} (x) \right|.
\end{equation}
\end{theorem}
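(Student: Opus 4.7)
The plan is as follows. I would Taylor expand each displaced wave function about $x$ to order $2a$ with the Lagrange remainder of order $2a+1$,
\begin{equation*}
\psi(x+jh)=\sum_{k=0}^{2a}\frac{(jh)^{k}}{k!}\psi^{(k)}(x)+\frac{(jh)^{2a+1}}{(2a+1)!}\psi^{(2a+1)}(\xi_{j}),
\end{equation*}
for some $\xi_j$ between $x$ and $x+jh$. Substituting this into the formula from \eq{fdf} and swapping the two finite sums exposes the polynomial moments $\sum_{j}d_{2a+1,j}\,j^{k}$ for $k=0,1,\dots,2a$. The coefficients in \defn{dcoeffs} are exactly the Lagrange interpolation weights that reproduce the second derivative on polynomials of degree up to $2a$, which forces every moment $k\in\{0,1,\dots,2a\}$ to vanish except $k=2$, whose value is $2$. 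What survives is the identification of $\psi''(x)$ together with
\begin{equation*}
O_{2a+1}=\frac{h^{2a-1}}{(2a+1)!}\sum_{j=-a,\,j\neq 0}^{a}d_{2a+1,j}\,j^{2a+1}\,\psi^{(2a+1)}(\xi_{j}).
\end{equation*}

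Pulling $\max_{x}|\psi^{(2a+1)}(x)|$ outside via the triangle inequality reduces the task to controlling
$\Sigma_{a}:=\sum_{j=-a,\,j\neq 0}^{a}|d_{2a+1,j}|\,|j|^{2a+1}$. Using \eq{dcoeffs} together with the identity $(a!)^{2}/[(a+j)!(a-j)!]=\binom{2a}{a+j}/\binom{2a}{a}$, this becomes $\Sigma_{a}=4\binom{2a}{a}^{-1}\sum_{j=1}^{a}j^{2a-1}\binom{2a}{a+j}$. My first attempt would combine the crude bounds $|j|^{2a-1}\leq a^{2a-1}$ and $\sum_{j=1}^{a}\binom{2a}{a+j}=(4^{a}-\binom{2a}{a})/2<4^{a}/2$ with a nonasymptotic Stirling lower bound of the form $\binom{2a}{a}\geq c\,4^{a}/\sqrt{a}$, yielding $\Sigma_{a}\lesssim \sqrt{a}\,a^{2a-1}$.

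The remaining algebraic step is to divide by $(2a+1)!$ and extract the target exponential. Writing $(2a+1)!\sim\sqrt{2\pi(2a+1)}(2a+1)^{2a+1}e^{-(2a+1)}$ and $(2a+1)^{2a+1}\sim e\cdot 4^{a}a^{2a}(2a+1)$ produces
\begin{equation*}
\frac{a^{2a-1/2}}{(2a+1)!}\lesssim \frac{(e^{2}/4)^{a}}{\sqrt{\pi}\,a^{3/2}(2a+1)}.
\end{equation*}
Since $e^{2}/4=e^{2(1-\ln 2)}$, this already produces the announced exponential $e^{2a(1-\ln 2)}h^{2a-1}$. What remains is to package the polynomial-in-$a$ prefactor into the universal constant $\pi^{3/2}/9$.

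I expect the main obstacle to be pinning down that exact constant rather than just the correct scaling. The bound $|j|^{2a-1}\leq a^{2a-1}$ is not tight, since the maximum of $j^{2a-1}\binom{2a}{a+j}$ is actually attained at $j\approx(1-2/e^{2})a$, so to match the stated prefactor cleanly one would either carry Stirling's corrections to nonasymptotic form throughout, or recast the inner sum as $4^{a}\,\mE|X-a|^{2a-1}$ with $X\sim\text{Binomial}(2a,1/2)$ and apply an explicit absolute moment inequality. Either route is routine in concept but delicate in bookkeeping, and small values of $a$ would likely need to be handled by separate verification to confirm that the bound $\pi^{3/2}/9$ holds uniformly for all $a\in\mathbb{Z}_{+}$.
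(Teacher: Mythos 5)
Your overall skeleton matches the paper's: the error representation you derive from the Lagrange remainder is exactly the starting point the paper takes from Corollary~2.2 of Li (2005), the triangle-inequality step is identical, and Stirling's formula is what produces $e^{2a(1-\ln 2)}=(e^{2}/4)^{a}$ in both arguments. Where you diverge is in bounding $\Sigma_a=\sum_{j\neq 0}|d_{2a+1,j}|\,|j|^{2a+1}$: you convert the coefficients to central binomial coefficients and combine $\sum_{j\ge 1}\binom{2a}{a+j}<4^{a}/2$ with $\binom{2a}{a}\gtrsim 4^{a}/\sqrt{\pi a}$, which is asymptotically tighter than what the paper does (you gain roughly a factor $a^{3/2}$), but, as you yourself flag, it leaves the stated constant $\pi^{3/2}/9$ unestablished and would force you to carry explicit nonasymptotic Stirling constants and verify small $a$ separately. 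The paper sidesteps that bookkeeping entirely with a cruder but uniform two-step bound: first $|j|^{2a+1}\le a^{2a+1}$, then \lem{dcoeffbound} --- the observation you already rederived, $(a+j)!(a-j)!\ge (a!)^{2}$, gives $\sum_{j\ne 0}|d_{2a+1,j}|<2\pi^{2}/3$ for every $a\ge 1$ --- followed by the one-line Stirling estimate
\begin{equation*}
\frac{a^{2a+1}}{(2a+1)!}\;\le\;\frac{\sqrt{a}\,e^{2a(1-\ln 2)}}{2(2a+1)\sqrt{\pi}}\;\le\;\frac{e^{2a(1-\ln 2)}}{6\sqrt{\pi}},
\end{equation*}
so the constant drops out as $\frac{2\pi^{2}}{3}\cdot\frac{1}{6\sqrt{\pi}}=\frac{\pi^{3/2}}{9}$ with no small-$a$ casework. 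In short, your plan is viable and would even yield a bound with a prefactor decaying in $a$, but the announced constant is the one piece you have not delivered; the cheapest fix is to replace your binomial-moment estimate by the uniform coefficient-sum bound, which is precisely \lem{dcoeffbound}.
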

\begin{proof}
Using the expression for the error in Corollary 2.2 of Ref.~\cite{li2005general} and the triangle inequality we have that
\begin{align}
| O_{2a+1} | &\le \frac{h^{2a-1}}{(2a+1)!} \max_x \left| \psi^{(2a+1)} (x)\right| \sum_{j=-a, j\ne 0}^a |d_{2a+1,j}| |j|^{2a+1}\nonumber\\
&\le \frac{h^{2a-1}a^{2a+1}}{(2a+1)!} \max_x \left| \psi^{(2a+1)} (x) \right| \sum_{j=-a, j\ne 0}^a |d_{2a+1,j}| \nonumber\\
&< \frac{2\pi^2h^{2a-1}a^{2a+1}}{3(2a+1)!} \max_x \left| \psi^{(2a+1)} (x) \right|,\label{eq:O2abd}
\end{align}
where we used \lem{dcoeffbound} in the final step. 
Using Stirling's approximation and the fact that $a\in \mathbb{Z}_+$ and hence $a\ge 1$ we have that
\begin{equation}
\frac{a^{2a+1}}{(2a+1)!} \le \frac{\sqrt{a}e^{2a[1-\ln 2]}}{2(2a+1)\sqrt{\pi}} \le\frac{e^{2a[1-\ln 2]}}{6\sqrt{\pi}}  ,
\end{equation}
and then we find by substituting the result into~\eq{O2abd} that
\begin{align}
| O_{2a+1} | & \le \frac{\pi^{3/2}}{9} e^{2a[ 1- \ln 2]} h^{2a-1} \max_x \left| \psi^{(2a+1)} (x) \right|.
\end{align}
\end{proof}

We approximate the kinetic energy operator $T = - \sum_{i,n} \frac{1}{2m_i} \partial_{i,n}^2$ using this finite difference formula. By choosing $a$ sufficiently large, we can do this to arbitrary precision assuming $\psi(x)$ is smooth and that its derivatives, $\partial_x^p \psi(x)$, grow at most exponentially in magnitude with $p$.




\section{Applying the Hamiltonian}
\label{sec:applying}


We approximate the discretized Hamiltonian $\tilde H$ by a linear combination of easily-applied unitary operators as in \eq{decomp}. To approximate $\tilde H$ to arbitrary precision while keeping the operators $V_\chi$ simple (i.e.~only signature matrices and adders), we will in fact approximate the scaled Hamiltonian $M\tilde H$ by a linear combination of unitary operators, i.e.,
\be
\label{eq:tildedecomp}
M \tilde H \approx \sum_\chi d_\chi V_\chi,
\ee
where $M>0$ determines the precision to which the sum, divided by $M$, approximates $\tilde H$. Then, rather than simulating evolution under the Hamiltonian $\tilde H$ for time $t$, we instead simulate evolution of $\psi(y(x))$ under the scaled Hamiltonian $M\tilde H$ for time $t / M$. In \sec{simulation}, we discussed how to simulate evolution when the Hamiltonian is a linear combination of unitary operators using two operators: $\select{W}$ (\eq{selectW}), which maps $\ket{\alpha} \ket\psi \mapsto \ket{\alpha} W_\alpha \ket\psi$, where $\sum_\alpha c_\alpha W_\alpha$ approximates $U(t/r)$, and $B$ (\eq{B}), which maps $\ket0 \mapsto \frac{1}{\sqrt{c}} \sum_\alpha \sqrt{c_\alpha} \ket{\alpha}$,
where $c = \sum_\alpha c_\alpha$ is the normalization constant $\sqrt{c}$ squared \cite{berry2015simulating}.

As in Ref.~\cite{berry2015simulating}, we construct $\select{W}$ using $K$ copies of an operator $\select{V}$ which chooses a single unitary operator $V_\chi$ from the sum \eq{tildedecomp} and applies it to the position register $\ket{x}$. In this section, we construct the operator $\select{V}$ so that it determines $V_\chi$ using an index register $\ket\chi$. Its action is
\be
\label{eq:selectVsimple}
\select{V} \ket\chi \ket{x} = \ket\chi V_\chi \ket{x}.
\ee

Let us explain \eq{selectVsimple} in greater detail. We wish to approximately apply $M\tilde H = M(\tilde T + \tilde V)$ to $\ket{x}$. However, $M\tilde H$ is not in general unitary, so we approximate $M\tilde H$ by a linear combination of unitary operators $V_\chi$ to some precision, and then use products of that linear combination to simulate evolution under $M\tilde H$. This means that for simulation we must work with a superposition of the different states $\chi$ of the index register, weighted by the factors $d_\chi$.

We show below that $M\tilde H$ can be approximated to arbitrary precision $\delta > 0$ by a linear combination of unitary operators---specifically, adder circuits and signature matrices---with $|\{\chi\}| = 2\eta Da + \left\lceil V_{\max} / \delta \right\rceil$ terms for a general potential bounded by $V_{\max}$. For the modified Coulomb potential this can be done with $2\eta Da + \left\lceil \frac{\eta (\eta - 1) q^2}{2 \Delta \delta} \right\rceil$ terms.

\begin{lemma}
\label{lem:selectVerrbd}
Let $V$ be some position-dependent potential energy operator bounded by $\| V(x) \|_{\infty} \le V_{\max}$, where $V_{\max} \ge 0$. Let $\delta > 0$, and let $\tilde H$ be the Hamiltonian in \defn{grid}, with the discretized potential energy operator $\tilde V(x) = V(y(x))$. We can approximate $\tilde H$ to accuracy $\delta$ by a linear combination of $2\eta D a$ addition circuits and $M \ge V_{\max} / \delta$ signature matrices, that is,
$$
\left\| \tilde H - \frac{1}{M} \sum_{\chi} d_\chi V_{\chi} \right\| \le \delta,$$
where $d_\chi > 0$ and each $V_\chi$ is either a unitary adder or a signature matrix.
\end{lemma}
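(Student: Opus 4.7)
The plan is to handle the kinetic and potential parts of $\tilde H = \tilde T + \tilde V$ separately: I would obtain an exact decomposition of $\tilde T$ as a linear combination of unitary adders using \defn{dcoeffs}, then an approximate decomposition of $\tilde V$ as a sum of $M$ signature matrices with equal positive weights, and finally assemble both into the required $\frac{1}{M}\sum_\chi d_\chi V_\chi$ form.

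For the kinetic part, \defn{dcoeffs} already gives $\tilde T = -\sum_{i,n,\, j\ne 0} \frac{d_{2a+1,j}}{2m_i h^2}\, T_{i,n}^{(j)}$, where $T_{i,n}^{(j)}$ is the unitary adder shifting the $(i,n)$-coordinate by $j$ grid steps modulo $L$. This is exact and uses $2\eta D a$ adders. To cast it in the stated form I would absorb the sign of $-d_{2a+1,j}$ into the unitary (still unitary, since $\pm T_{i,n}^{(j)}$ is unitary), producing positive weights $d_\chi = M|d_{2a+1,j}|/(2m_i h^2)$ and unitaries $V_\chi = -\mathrm{sign}(d_{2a+1,j})\,T_{i,n}^{(j)}$, which reproduce $\tilde T$ exactly once the global $1/M$ is applied.

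For the potential part, $\tilde V$ is diagonal in the position basis with entries $\tilde V(y_k) \in [-V_{\max}, V_{\max}]$. I would take $M$ signature matrices $\Sigma_1,\dots,\Sigma_M$, each carrying weight $d_\chi = V_{\max}$, so that their contribution at diagonal position $k$ is $\frac{V_{\max}}{M}(n_+^{(k)} - n_-^{(k)})$ with $n_+^{(k)} + n_-^{(k)} = M$. The achievable values form a discrete set of $M{+}1$ equispaced points in $[-V_{\max},V_{\max}]$ with spacing $2V_{\max}/M$. For each $k$, I would select the signs at that diagonal position so that the discretized value is the nearest to $\tilde V(y_k)$, bounding the per-entry approximation error by $V_{\max}/M$.

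Putting the pieces together, since $\tilde T$ is represented exactly and the signature matrix approximation yields a diagonal error matrix bounded in magnitude by $V_{\max}/M$ entrywise, the operator-norm error of $\tilde H - \frac{1}{M}\sum_\chi d_\chi V_\chi$ is at most $V_{\max}/M \le \delta$ whenever $M \ge V_{\max}/\delta$. The resulting decomposition uses $2\eta D a + M$ terms with positive weights, matching the claim. The only delicate point is that the signature matrix construction must be globally well-defined across all diagonal positions simultaneously, which reduces to noting that the $n_\pm^{(k)}$ are non-negative integers summing to $M$ by construction; the $m$-th signature matrix is simply the diagonal operator whose $k$-th entry is $+1$ or $-1$ according to whether $m \le n_+^{(k)}$. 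This is the main (very mild) obstacle; everything else is routine bookkeeping.
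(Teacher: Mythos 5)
Your proposal is correct and follows essentially the same route as the paper: the kinetic part is written exactly as a linear combination of $2\eta D a$ unitary adders, the diagonal potential is approximated by $M$ equal-weight signature matrices with per-entry error $V_{\max}/M\le\delta$, and the diagonal error matrix gives the spectral-norm bound. If anything, you are slightly more explicit than the paper in absorbing the signs of the coefficients $d_{2a+1,j}$ into the unitaries to ensure $d_\chi>0$ and in constructing the signature matrices entrywise, but these are the same ideas the paper's argument relies on implicitly.
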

\begin{proof}
$\tilde T$ is purely off-diagonal, and $\tilde V$ is purely diagonal. Furthermore, $\tilde T$ is a sum of the finite difference operators $S_{i,n}$ of \defn{dcoeffs}. From the central difference formula \eq{fdf},
\be
\label{eq:Tchis}
M \tilde T = M h^{-2} \sum_{i,n} \sum_{j=-a, j\neq 0}^{j=a} \frac{d_{2a+1,j}}{2m_i} A_j,
\ee
where $A_j$ represents unitary addition by $j$, $A_j \ket{x_{i,n}} = A_j \ket{ (x_{i,n} + j) \!\!\!\mod b }$. Because $\tilde V$ is purely diagonal, we can approximate $M\tilde V$ to precision $V_{\max}$ by
\be
\label{eq:Vchis}
M\tilde V \approx V_{\max} \sum_{j=1}^M S_j,
\ee
where each $S_j$ is a signature matrix (a diagonal matrix whose elements are all $\pm1$). With these decompositions, the Hamiltonian $M\tilde H$ is explicitly a linear combination of addition circuits and signature matrices. 

Let $d_\chi$ and $V_\chi$ be defined as in \eq{Tchis} and \eq{Vchis}: for $0 \le \chi < 2\eta D a$, $d_\chi = M h^{-2} \frac{d_{2a+1,j}}{2m_i}$ and $V_\chi = A_j$, and for $2\eta D a \le \chi < 2\eta D a + M$, $d_\chi = V_{\max}$ and $V_\chi = S_j$. (We do not specify exactly the mapping between $\chi$ and $(i, n, j)$.)
We consider the error in the diagonal of $\tilde H$, and then the error in the off-diagonal. The diagonal of $M \tilde H$ is $M\tilde V$; it is approximated by the sum $\sum_{\chi \ge 2\eta D a} d_\chi V_\chi$. The off-diagonal matrix elements of $M\tilde H$ are $M\tilde T$ and are given exactly by the sum $\sum_{\chi < 2\eta D a} d_\chi V_\chi$, so the error in the off-diagonal is zero.
Thus, the error in approximating $M\tilde H$ is only the error in approximating $M\tilde V$. As in \eq{Vchis}, the error in approximating $M\tilde V$ is at most $V_{\max}$, so the error in approximating $\tilde V$, and hence $\tilde H$, is at most $V_{\max} / M$. Choosing $M \ge V_{\max} / \delta$ then ensures that 
\be
\left\|\tilde H -\frac 1M \sum_\chi d_\chi V_\chi\right\|_{\max}\le \delta.
\ee
Finally our result follows from the fact that the max--norm and the spectral norm are equal for diagonal operators.
\end{proof}

The modified Coulomb potential energy operator of \eq{simppot} has $\| V_\text{Coulomb} \|_{\infty} \le V_{\max} = \frac{\eta (\eta - 1) q^2}{2 \Delta}$, which implies that its discretized counterpart $\tilde V$ also satisfies $\| \tilde V \|_{\infty} \le \frac{\eta (\eta - 1) q^2}{2 \Delta}$. Hence $M = \left\lceil \frac{\eta (\eta - 1) q^2}{2 \Delta \delta} \right\rceil \ge \lceil V_{\max} / \delta \rceil$ is sufficient by \lem{selectVerrbd}, and we can approximate $\tilde H$ with the potential $\tilde V(x) = V_\text{Coulomb}(y(x))$ to precision $\delta$ by a linear combination of $2\eta Da + \left\lceil \frac{\eta (\eta - 1) q^2}{2 \Delta \delta} \right\rceil$ terms. The index register $\ket\chi$ must determine which of the $2\eta D a + M$ unitary operators to apply, so it must have at least $\lceil \log(2\eta D a + M ) \rceil$ qubits.

\section{Complexity of evolving under the Hamiltonian}
\label{sec:evolving}

We now analyze the complexity of evolving under the discretized Hamiltonian using the BCCKS technique for Hamiltonian simulation \cite{berry2015simulating}, reviewed in \sec{simulation}. We break the total simulation time $t$ into $r$ segments each of length $t/r$.
Then, we approximate the time evolution operator $U(t) = \exp(-i \tilde Ht)$ by a Taylor series truncated to order $K$, which results in an error of $O\left(\frac{(\| \tilde H \| t/r)^{K+1}}{(K+1)!}\right)$. By choosing $r \ge \|\tilde H\|t$, the numerator of the leading error term is less than or equal to 1; we can thus choose $K \in O\left(\log(r/\epsilon)/\log\log(r/\epsilon)\right)$ to bound the Taylor series simulation error to $O(\epsilon)$ \cite{berry2015simulating}.

The final two missing pieces from the simulation are the operators $B$ and $\select{W}$ from \sec{simulation}. We wish to simulate time evolution under $\tilde H$ for time $t/r$ (equivalently, under $M \tilde H$ for time $t / rM$), which we can do approximately as in \eq{tts_chiform},
$$
U (t/r) \approx W(t / r) = \sum_{k=0}^K \sum_{\chi_1, \dotsc, \chi_k} \frac{(-it/rM)^k}{k!} d_{\chi_1} \dotsm d_{\chi_k} V_{\chi_1} \dotsm V_{\chi_k},
$$
using $U(t) = \exp \big(-i\tilde H t \big) = \exp \big(-i M\tilde H (t / M) \big)$. 
In \eq{alpha} of \sec{simulation}, we defined a multi-index $\alpha$ encompassing $k$ and $\chi_1$ through $\chi_k$, through which we simplified this expression to
$$
W(t/r) = \sum_{\alpha} c_\alpha W_\alpha,
$$
where $c_\alpha = (t / rM)^k  / k! \prod_{k^\prime=1}^k d_{\chi_{k^\prime}}$ includes all the real coefficients, and $W_\alpha = (-i)^k \prod_{k^\prime=1}^kV_{\chi_k}$ includes all the unitary operators. Recall from \eq{selectW} that $\select{W}$ gives $W_\alpha$, and from \eq{B} that $B$ handles the appropriate coefficients and the sum over the index register $\ket \chi$, that is, $c_\alpha$. The implementation of $B$ is discussed in Ref.~\cite{berry2014exponential}, and is of less interest because of our cost model.  Let us describe how to implement $\select{W}$ in detail.

The operator $\select{W}$ is $K$ controlled applications of $\select{V}$ and $K$ controlled phase gates.  The operators $V_{\chi_i}$ are those obtained by applying $\select{V}$ to the index state $\ket{\chi_i}$. We can obtain a product of up to $K$ of these operators by using $K$ copies of the index registers $\ket\chi$, and applying $\select{V}$ to each of them. 
But how do we account for the fact that we do not always want a product of exactly $K$ operators $-iV_{\chi_i}$? This is done using a register $\ket{k}$ of $K$ qubits, which encodes the value $k$ in unary. We apply $\select{V}$ to the index register $\ket{\chi_i}$, \emph{controlled} on the $i^\text{th}$ qubit of $\ket k$. 
We can apply the phase gates directly to the qubits of $\ket k$; these gates need not be controlled. The unary register $\ket k$, as well as the $K$ index registers $\ket{\chi_1} \dotsm \ket{\chi_K}$, are initialized in some superposition state by $B$.

As in \sec{simulation} and Ref.~\cite{berry2015simulating}, the action of the operator $A = (B^\dagger \otimes \mathds1) \select{W} (B \otimes \mathds1)$ on the ancilla and state registers is given by
\be
A \ket{0} \ket\psi = \frac{1}{c} \ket{0} W(t/r) \ket\psi + \sqrt{1 - \frac{1}{c^2}} \ket\phi,
\ee
where $\ket\phi$ is some state with the ancilla orthogonal to $\ket{0}$. Provided that $c \approx 2$, we can perform oblivious amplitude amplification: with $P_0$ the projection operator onto the zero ancilla state and $R = \mathds1 - 2P_0$, the oblivious amplitude amplification operator $G = -A R A^\dagger R A$ satisfies
\be
\label{eq:P0G}
\| P_0 G \ket{0} \ket\psi - \ket{0} U(t/r) \ket\psi \| \in O(\epsilon / r).
\ee
We repeat this process $r$ times to approximate the action of $U(t)$ on $\ket\psi$ to precision $O(\epsilon)$.  We now show that this process can expediently simulate quantum dynamics in real space and thereby prove \thm{querybound}.

\begin{proofof}{\thm{querybound}}
Rather than simulating $\tilde H$ acting on $\psi(y(x))$ for time $t$, we instead simulate the approximation of $M \tilde H$ from \lem{selectVerrbd} for time $t / M$, where $M \ge \lceil V_{\max} / \delta \rceil$. The error in approximating $\tilde H$ is $\delta$, so we must choose $\delta \in O(\epsilon / t)$ so that $\left\| e^{-i \tilde H t} - e^{-i (t / M) \sum_{\chi} d_\chi V_{\chi}} \right\| \le O(\epsilon)$. The Hamiltonian is simulated by applying $P_0 G$ (\eq{P0G}) $r$ times. This simulates evolution under $M\tilde H \approx \sum_{\chi} d_\chi V_{\chi}$ for time $t / M$ to precision $O(\epsilon)$; by the triangle inequality, the total precision is also $O(\epsilon)$.

Each application of $P_0 G$ uses $A$ three times and each application of $A$ uses $\select{W}$ once. $\select{W}$ applies a product of up to $K$ unitary operators $V_\chi$ and uses $\select{V}$ $K$ times, where 
\be
\label{eq:Kscale}
K \in O\left( \frac{\log(r / \epsilon)}{\log\log(r / \epsilon)}\right),
\ee
 from Eq.~(4) of Ref.~\cite{berry2015simulating}.  Therefore the cost of applying $P_0 G$ is $K$ times the query complexity of implementing $\select{V}$.

At first glance $\select{V}$ would seem to require $2\eta D a+M$ queries because the Hamiltonian can be decomposed into $2\eta D a +M$ unitary matrices; in fact, it can be implemented using $\Theta(1)$ queries.  To see this, let us first begin with implementing $V(x)$.  We see from the arguments of Ref.~\cite{berry2015simulating} that such a term can be simulated using a single query to an oracle that gives $V(x)$ and a polynomial amount of additional control logic to make the $M$ unitary terms (known as signature matrices) sum to the correct value.  The less obvious fact is that the $2\eta D a$ unitary adders can also be simulated using similar intuition.  This can be seen by noting that using appropriate control logic, it is possible to swap the register that a given adder acts on to a common location so only one addition needs to be performed.

To see this, consider the register that stores the index for the state $\ket{\chi} = \ket{i,n,j}$ where $n$ is the index for the dimension, $i$ is the index for the particle, and $j\in [-a,\dotsc, a]\setminus 0$.  Consider a computational basis state that encodes the positions of each particle and the unitary adder that needs to be performed of the form 
\be
\label{eq:orig_reg}
\ket{\chi} \ket{x_{1,1}}\dotsm\ket{x_{i,n}}\dotsm \ket{x_{\eta,D}}.
\ee
Then using the data in $\ket{\chi}$ a series of swap operations can be performed such that
\be
\label{eq:swap_reg}
\ket{\chi} \ket{x_{1,1}}\dotsm\ket{x_{i,n}}\dotsm \ket{x_{\eta,D}} \mapsto \ket{\chi} \ket{x_{i,n}} \dotsm \ket{x_{1,1}} \dotsm \ket{x_{\eta,D}}.
\ee
The addition of $j$ to this register can then be performed by applying,
\be
\label{eq:add_reg}
\ket{\chi} \ket{x_{i,n}} \dotsm \ket{x_{1,1}} \dotsm \ket{x_{\eta,D}} \mapsto \ket{i,n}{\rm Add}\left(\ket{j} \ket{x_{i,n}}\right) \dotsm \ket{x_{1,1}} \dotsm \ket{x_{\eta,D}}.
\ee
The desired result then follows from inverting the swap gates.  Since quantum mechanics is linear, any unitary that performs such swaps for an arbitrary value of $\chi$ that is stored in the ancilla register will also have the correct action on a superposition state.  Thus it is possible to perform the addition using a single query to an adder circuit, given that such a network of controlled swaps can be implemented.

\begin{figure}[t!]
\begin{center}
\includegraphics[width=0.37\textwidth, trim={2cm 4.35cm 18cm 3cm},clip]{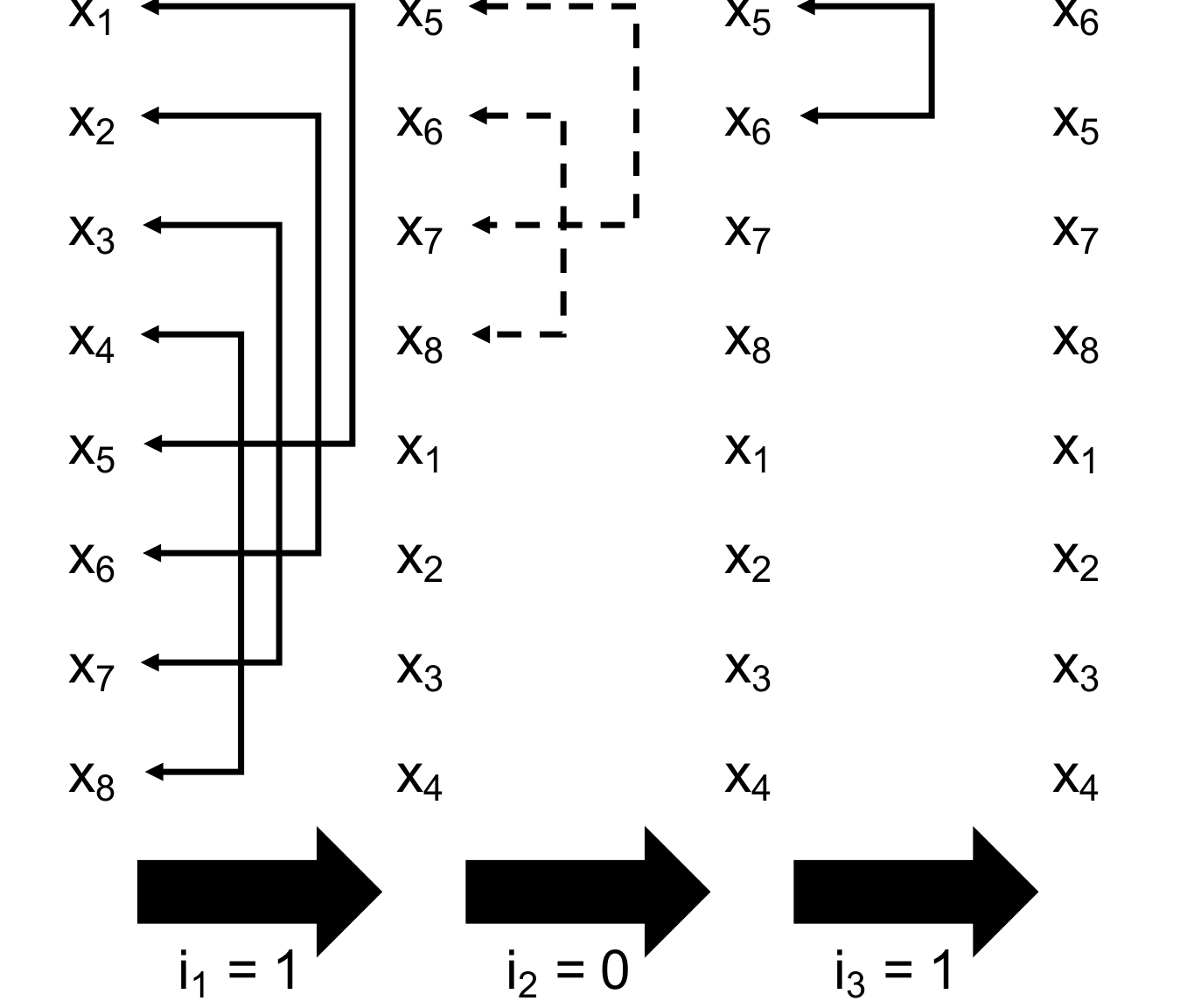}
\caption{The controlled swap procedure used in the proof of \thm{querybound} for $i=6$ with $\eta = 8$. $i=6$ is stored in a register as $101$. Since the first bit of $i$ is 1, $\ket{x_{i^\prime}}$ is swapped with $\ket{x_{i^\prime - 4}}$ for $i^\prime \in [5, 8]$ (solid arrows). The second stage (dashed arrows) is not performed since the second bit of $i$ is 0. Finally, since the third bit of $i$ is 1, $x_5$ and $x_6$ are swapped to leave $x_6$ in the first position. Though the gate count scales linearly in $\eta D$, the circuit depth is only logarithmic in it.}
\label{fig:log_cswaps}
\end{center}
\end{figure}

Such a series of swaps can be shown constructively to exist by using a strategy similar to binary search.
The steps are as follows. For $k \in [1, \lceil \log\eta \rceil]$: controlled on the $k^\text{th}$ qubit of $i$, swap $\ket{x_{i^\prime}}$ with $\ket{x_{i^\prime - 2^{\lceil \log\eta \rceil - k}}}$ for $i^\prime \in [2^{\lceil \log\eta \rceil - k} + 1, 2^{\lceil \log\eta \rceil - k + 1} ]$. \fig{log_cswaps} gives an example of this procedure for $i=6$ with $\eta = 8$.
After each stage, the desired $\ket{x_{i}}$ register is in position $i \!\!\mod 2^{\lceil \log\eta \rceil - k}$. Thus, after all $\lceil \log\eta \rceil$ iterations $\ket{x_{i}}$ occupies the first particle position $i^\prime = 1$. We repeat the same process for the coordinate $n$, so that $\ket{x_{in}}$ is first in the $\ket{x}$ register, and apply the unitary adder to it, adding by $j$. Finally, we run the sequence of controlled swap gates in reverse to return all the registers in $\ket{x}$ to their original positions. The controlled swaps require $O(\eta D \log(L/h))$ gates but only depth $O(\log(\eta D))$. 

So $\select{V}$ requires $\Theta(1)$ queries to the potential energy oracle and $\Theta(1)$ unitary adders. Each application of $P_0 G$ requires, from~\eq{Kscale}, $3K \in O\left( \frac{\log(r / \epsilon)}{\log\log(r / \epsilon)}\right)$ uses of $\select{V}$, and as such $\Theta(K)$ unitary adders and calls to the potential energy oracle.  The query complexity within our model then scales as~\cite{berry2015simulating}
\be
\Theta(Kr) \subseteq O\left(\frac{r\log(r / \epsilon)}{\log\log(r / \epsilon)} \right).
\ee
The results in Ref.~\cite{berry2015simulating}  require that $r\ge \| \tilde H \| t = \| \tilde H \| t $. A stricter requirement, that $r = \sum_\chi |d_\chi| t / \ln 2$, is given by the condition of oblivious amplitude amplification that $c = \sum_\alpha c_\alpha \approx 2$. If $r$ is not an integer, we can take the ceiling of this as $r$, and the final segment will have $c < 2$, which can be compensated for using an ancilla qubit \cite{berry2015simulating}.  In order to guarantee that we have enough segments to satisfy these requirements we choose
\be
\begin{aligned}
r =& \left\lceil \sum_\chi | d_\chi | t / M\ln 2 \right\rceil \\
\le& \left( h^{-2} \sum_{i,n} \sum_{j=-a, j\neq 0}^{j=a} \frac{| d_{2a+1,j} |}{2m_i} + V_{\max} \right) t / \ln 2 +1,\\
\end{aligned}
\label{eq:mistake}
\ee
where we used \eq{Tchis} and \eq{Vchis}.
This upper bound on $r$ then allows us to determine an upper bound on how many adder circuits or how many queries to an oracle for the potential energy are required for simulation.

We then see from~\lem{dcoeffbound} and~\eq{mistake} that the number of times that $P_0G$ is applied, $r$, obeys
\be
r \le \left( \frac{\pi^2 \eta D }{3m h^2} + V_{\max} \right) t / \ln2+1.\label{eq:req2}
\ee
Finally using~\eq{req2} we have that the total number of queries made to  $V$ and $T$ scales as
\begin{equation}
\Theta(Kr)  \subseteq O\left( \left( \frac{\eta D}{mh^2} + V_{\max}\right)t \left[\frac{\log\left( \frac{\eta Dt}{mh^2 \epsilon} + \frac{V_{\max}t}{\epsilon}\right) }{\log\left(\log\left( \frac{\eta Dt}{mh^2 \epsilon} + \frac{V_{\max}t}{\epsilon}\right)\right)}\right] \right),
\end{equation}
as claimed.
\end{proofof}

This shows that if a modest value of $h$ can be tolerated then the continuous-variable simulation that we discuss above will require a number of resources that scales slightly superlinearly with the number of particles.  

A possible criticism of the above cost analysis is that the potential energy oracle considered requires a number of operations that scales polynomially with the number of particles were we to implement it using elementary operations for pairwise Hamiltonians such as the Coulomb Hamiltonian.  One way to deal with this is to use oracles that have complexity that is constant in the size of the simulation, such as an oracle for each of the pairwise interactions.  We show in the corollary below that switching to such a pairwise oracle and optimizing the simulation against it leads to a query complexity that is the same as that in \thm{querybound} (potentially up to logarithmic factors).

\begin{corollary}
Let $V_{ij}$ be the potential energy operator for the two-particle interaction between particles $i$ and $j$. 
With $\tilde H$ be as in \thm{querybound} with $V = \sum_{i < j} V_{ij}$, we can simulate $e^{-i\tilde H t}\ket{\psi}$ for time $t > 0$ within error $\epsilon > 0$ with 
$$\tilde O\left( \left( \frac{\eta D}{mh^2} + V_{\max}\right)t \log\left( 1/ \epsilon \right)  \right)$$ 
 unitary adders and queries to an oracle for the two-particle potential energy $V_{ij}$.
\end{corollary}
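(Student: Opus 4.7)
The plan is to adapt the proof of~\thm{querybound} in two places: refining the linear-combination-of-unitaries from~\lem{selectVerrbd} so that it uses the pairwise $V_{ij}$ rather than the monolithic $V$, and upgrading $\select{V}$ so that each potential term in the decomposition is realised by a single query to the two-particle oracle.

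First, I would write $\tilde V = \sum_{i<j}\tilde V_{ij}$ with $\tilde V_{ij}(x) = V_{ij}(y(x_i), y(x_j))$ and apply~\lem{selectVerrbd} termwise: each diagonal operator $M\tilde V_{ij}$ is approximated by $M_{ij} \ge V_{ij,\max}/\delta$ signature matrices, where $V_{ij,\max}\ge \| V_{ij}\|_\infty$ and $\sum_{i<j} V_{ij,\max} \le V_{\max}$ (with equality for the modified Coulomb interaction of~\eq{simppot}). Appended to the $2\eta Da$ kinetic adders, this yields an LCU decomposition of $M\tilde H$ whose total coefficient $\sum_\chi d_\chi$ is bounded by exactly the expression appearing in~\eq{mistake}. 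Consequently the segment count $r$ from~\eq{req2} and the Taylor truncation order $K\in O(\log(r/\epsilon)/\log\log(r/\epsilon))$ are inherited verbatim.

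Next, I would upgrade $\select{V}$ so that a potential index $\chi = (i,j,k)$ triggers only one query to the $V_{ij}$ oracle. The idea is to apply the binary-search routing of~\thm{querybound} twice in succession: once controlled on $i$, to move $\ket{x_i}$ into a fixed ancilla slot, and then once controlled on $j$ (with the block structure modified to skip over the slot now holding $\ket{x_i}$) to bring $\ket{x_j}$ into a second ancilla slot. A single query to the $V_{ij}$ oracle loads its value into a phase ancilla, the $k$-th signature matrix is realised as in~\lem{selectVerrbd}, and the swap network is uncomputed. Each invocation of $\select{V}$ thus costs $\Theta(1)$ queries to the pairwise oracle, or $\Theta(1)$ unitary adders when $\chi$ indexes a kinetic term, together with $O(\eta D\log(L/h))$ controlled swaps that fall outside the query model.

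Putting the pieces together, the total query cost is $\Theta(Kr)$, which with $r \in O\!\left((\eta D/mh^2 + V_{\max})t\right)$ from~\eq{req2} and $K \in O(\log(r/\epsilon)/\log\log(r/\epsilon))$ yields the stated $\tilde O\!\left((\eta D/mh^2 + V_{\max})t\log(1/\epsilon)\right)$ bound after absorbing $\log$-of-$\log$ factors into the tilde. The main technical obstacle is verifying that the two-stage swap routing behaves correctly on a superposition over ordered pairs $(i,j)$; this is handled by first computing a shifted index $j'$ into an ancilla that accounts for the displacement introduced by the routing controlled on $i$, so that the second routing acts within a fixed disjoint set of slots and the entire network can be uncomputed cleanly.
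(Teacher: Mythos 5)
Your proposal is correct and follows essentially the same route as the paper: decompose the potential into pairwise terms, use a controlled-swap (binary-search) network driven by the index register to route $\ket{x_i}$ and $\ket{x_j}$ to fixed slots, make a single query to the pairwise oracle combined with the signature-matrix trick, uncompute the swaps, and inherit the segment count $r$ and truncation order $K$ from \thm{querybound}. Your treatment is if anything slightly more explicit than the paper's (the two-stage routing with the shifted index $j'$ and the per-pair allotment of signature matrices), but it is the same argument.
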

\begin{proof}
The intuition behind our approach is to use the result in~\thm{querybound} for truncated Taylor-series simulation of the particle system, but to multiply the cost of the simulation by the cost of implementing the query using the pairwise oracles.  Since there are $\eta^2$ such terms one would expect that the complexity should be $\eta^2$ times that quoted in~\thm{querybound}.  However, we can optimize the algorithm for the pairwise oracle to perform the simulation by using a swap network similar to that exploited for the kinetic energy to reduce the cost.

We replace the potential energy operator $\tilde V$ by a sum of two-particle potential energies, so that the Hamiltonian we simulate is
$$
\tilde H = h^{-2} \sum_{i,n} \sum_{j=-a, j\neq 0}^{j=a} \frac{d_{2a+1,j}}{2m_i} A_j + \sum_{i \neq j} \tilde V_{ij}.
$$
In \thm{querybound}, we showed how to implement the $2\eta D a$ terms in the kinetic energy operator using a single adder circuit, and $V$ using a single query to the total potential energy. That is, for a two-particle potential, we evaluate $\sum_{i \neq j} V_{ij}$ by a single query to $V$. Thus, in order to show our claim that we can perform a single segment of evolution under $\tilde H$ using a constant number of unitary adders and queries to an oracle for the two-particle potential energy $V_{ij}$, we must show that the potential $V_{ij}$ can be evaluated with a constant number of queries to the pairwise potential.

In general, the pairwise potential energy is a function of the properties of the particles $i$ and $j$ as well as their positions. The action of the pairwise oracle $V_p$ is 
$$
V_p \left( \ket{ij} \ket0 \ket{x_a} \ket{x_b} \right) \coloneqq \ket{ij} \ket{V_{ij}(x_a, x_b)} \ket{x_a} \ket{x_b},
$$
where $x_a$ and $x_b$ are the positions of particles $a$ and $b$. 

The implementation of a segment in the truncated Taylor series simulation requires that we implement the Hamiltonian as a linear combination of unitaries.  We showed in \thm{querybound} that the kinetic energy part of the linear combination can be implemented using a constant number of adder circuits.  Therefore, in order to show that the pairwise Hamiltonian can also be implemented using a constant number of queries in this model we need to show that the two-particle potential terms in the linear combination can be enacted using a constant number of queries to $V_p$.

We show that the potential terms can be performed using a single query to $V_p$ using a swap network reminiscent of that used for the kinetic energy terms in~\thm{querybound}. 
Let us assume that we want to implement the $\chi^{\rm th}$ term in the decomposition, $H_\chi=V_{ij}$.  Then we can write the state of the control register and the simulator subspace as
\be
\ket\chi \ket0 \ket{x_1} \dotsm \ket{x_i} \dotsm \ket{x_j} \dotsm \ket{x_\eta}.
\ee
We use the data in the control register $\ket\chi$ to perform a series of controlled-swap operations such that
\be
\ket{\chi} \ket0 \ket{x_1} \dotsm \ket{x_i} \ket{x_j} \ket{x_\eta} \mapsto \ket{\chi} \ket0 \ket{x_{i}} \ket{x_{j}} \dotsm \ket{x_{1}} \dotsm \ket{x_{\eta}}.
\ee
This process uses ${\rm poly}(\eta)$ controlled swaps and no queries.
We then query the pairwise oracle $V_p$ to prepare the state
$$
\ket{\chi} \ket0 \ket{x_{i}} \ket{x_{j}} \dotsm \ket{x_{1}} \dotsm \ket{x_{\eta}} \mapsto \ket{\chi} \ket{V_{ij}(x_i, x_j)} \ket{x_{i}} \ket{x_{j}} \dotsm \ket{x_{1}} \dotsm \ket{x_{\eta}}.
$$
Then, using the signature matrix trick, we can implement these terms as a sum of unitary operations within arbitrarily small error after appropriately cleaning the ancilla qubits.  Because this circuit works uniformly for all pairwise interactions, the entire segment can be implemented using only one application of the above routine for simulating the potential terms and the routine for simulating the kinetic terms from~\thm{querybound}.  As argued, the routine requires only a constant number of queries, and therefore each segment requires only a constant number of queries to the adder circuit and $V_p$.  The corollary then follows from the bounds on the number of segments in~\thm{querybound}.
\end{proof}

This is significant because the best methods known for performing such simulations not only require the Born-Oppenheimer approximation, but also require $\tilde{O}(\eta^5)$ operations (assuming $\eta$ is proportional to the number of spin-orbitals) \cite{babbush2016exponentially,babbush2015exponentially}.  Thus, depending on the value of $h$ needed, this approach can potentially have major advantages in simulation time.

The value of $h$ needed for such a simulation is difficult to address as it depends sensitively on the input state being simulated.  In the next section, we provide estimates of the scaling of this parameter that show that the above intuition may not hold without strong assumptions about the states being simulated.  Specifically, we find that that the value of $h$ needed to guarantee that the simulation error is within $\epsilon$ can shrink exponentially with $\eta D$ in some pathological cases.

\section{Errors in Hamiltonian model}
\label{sec:errors}

In our discussion thus far, we have introduced several approximations and simplifications of the Hamiltonian so as to make the simulation problem well-defined and also tractable. In this section, we bound the errors incurred by these choices. 
At the heart of these approximations is the discretization of the system coordinates into $b$ hypercubes of side length $h$ along each spatial direction from \defn{grid}.

We begin by bounding the errors in the kinetic and potential energy operators, starting off with an upper bound on derivatives of the wave function assuming a maximum momentum $k_{\max}$ in \lem{err1}. We apply this upper bound to determine the maximum error in the finite difference approximation for the kinetic energy operator in \thm{T}.
Following that, in \lem{V}, we upper bound the error in the potential energy operator due to discretization (the difference between $V(x)$ and $\tilde V(x) = V(y(x))$ of \defn{grid}).

We shift our focus from errors in the operators to simulation errors beginning in \lem{HHtildeErr}, where we give the error in evolving under $\tilde H$ rather than $H$. In \lem{psitilde} we bound the error in evolving the discretized wave function rather than the wave function itself. We give the total simulation error in \cor{total}, and in \lem{normalization} give the difference between simulating the wave function $\psi(x)$ and the discretized wave function $\psi(y(x))$ due to normalization. Finally, in \thm{totalab} we determine the values of $a$ and $h$ needed to bound the total simulation error to arbitrary $\epsilon > 0$ in the worst case, before discussing for which states the worst case holds, and then determining the requirements on $a$ and $h$ under more optimistic assumptions about the scaling of the derivatives of the wave function in \cor{optimistic}.

We begin by introducing a lemma which we use to bound the errors in the kinetic and potential energy operators, assuming a maximum momentum:

\begin{lemma}
\label{lem:err1}
Let $\psi(k):\mathbb{R}^N \mapsto \mathbb{C}$ and $\psi(x):\mathbb{R}^N \mapsto \mathbb{C}$ be conjugate momentum and position representations of the same wave function in $N$ dimensions and assume that $\psi(k)=0$ if $\| k \|_{\infty} > k_{\rm max}$.  Then for any position component $x_{i}$, and any non-negative integer $r$,
$$| \partial_{x_{i}}^r \psi(x) | \le \frac{k_{\max}^r}{\sqrt{2r+1}}\left(\frac{k_{\max}}{\pi}\right)^{N/2}. $$
\end{lemma}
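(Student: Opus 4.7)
The plan is to obtain the bound by passing to the momentum representation, where the derivative is just multiplication by $(ik_i)^r$, and then applying Cauchy--Schwarz on the compact support $[-k_{\max},k_{\max}]^N$. First I would write the inverse Fourier transform
\begin{equation}
\psi(x) = \frac{1}{(2\pi)^{N/2}}\int_{\mathbb{R}^N} \psi(k)\, e^{i k\cdot x}\, d^N k,
\end{equation}
and differentiate under the integral sign to obtain
\begin{equation}
\partial_{x_i}^r \psi(x) = \frac{1}{(2\pi)^{N/2}}\int_{\mathbb{R}^N} (i k_i)^r\, \psi(k)\, e^{i k\cdot x}\, d^N k.
\end{equation}
Because the hypothesis gives $\psi(k)=0$ outside the box $[-k_{\max},k_{\max}]^N$, the effective integration domain is that box, so taking absolute values inside the integral yields
\begin{equation}
|\partial_{x_i}^r \psi(x)| \le \frac{1}{(2\pi)^{N/2}}\int_{[-k_{\max},k_{\max}]^N} |k_i|^r\, |\psi(k)|\, d^N k.
\end{equation}

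Next I would apply the Cauchy--Schwarz inequality to the product $|\psi(k)|\cdot |k_i|^r\mathbf{1}_{[-k_{\max},k_{\max}]^N}$, using Plancherel/Parseval together with normalization of $\psi$ to get $\int |\psi(k)|^2\, d^N k = 1$. This gives
\begin{equation}
|\partial_{x_i}^r \psi(x)| \le \frac{1}{(2\pi)^{N/2}}\left(\int_{[-k_{\max},k_{\max}]^N} k_i^{2r}\, d^N k\right)^{1/2}.
\end{equation}
The remaining integral factorizes: along the $i^\text{th}$ axis it equals $\tfrac{2 k_{\max}^{2r+1}}{2r+1}$, while each of the other $N-1$ axes contributes $2k_{\max}$. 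Therefore
\begin{equation}
\int_{[-k_{\max},k_{\max}]^N} k_i^{2r}\, d^N k = \frac{(2k_{\max})^N\, k_{\max}^{2r}}{2r+1}.
\end{equation}
Plugging this in and simplifying $2^{N/2}/(2\pi)^{N/2} = \pi^{-N/2}$ yields exactly the stated bound.

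There is no substantive obstacle here: the content of the lemma is really just that compactly-supported momentum wave functions have smooth position representations with derivatives controlled by the support size, and the numerical constant $1/\sqrt{2r+1}$ is the only thing one has to track carefully. The only subtlety is making sure the Fourier convention used is the unitary one (so that $\|\psi(x)\|_{L^2}=\|\psi(k)\|_{L^2}=1$), since other conventions would shift factors of $2\pi$ around; once that is fixed the calculation above is essentially automatic.
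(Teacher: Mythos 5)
Your proposal is correct and is essentially the paper's own argument: the paper likewise expresses $\partial_{x_i}^r\psi(x)$ via the momentum representation (written in Dirac notation with $\braket{x}{k}=e^{ik\cdot x}/(2\pi)^{N/2}$), restricts to the box $[-k_{\max},k_{\max}]^N$ using the cutoff, applies Cauchy--Schwarz together with the normalization $\int|\psi(k)|^2\,d^Nk=1$, and evaluates $\int_{[-k_{\max},k_{\max}]^N}k_i^{2r}\,d^Nk$ exactly as you do. The only differences are cosmetic (explicit Fourier integrals and taking absolute values before Cauchy--Schwarz, versus applying it directly), so nothing further is needed.
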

\begin{proof}
\begin{align}
\partial_{x_{i}}^r \psi(x) &=  i^r\bra{x} p_{i}^r \ket{\psi}\nn
&= i^r \int_{-\infty}^\infty \dotsi \int_{-\infty}^\infty \bra{x} p_{i}^r \ket{k}\braket{k}{\psi} \,\mathrm{d}^N k.
\end{align}
Using the momentum cutoff and the fact that $\braket{x}{k} = \frac{e^{i x\cdot k} }{ (2\pi)^{N/2} }$ in $N$ dimensions, we then have
\begin{align}
\partial_{x_{i}}^r \psi(x) 
&=\frac{ i^r}{(2\pi)^{N/2}} \int_{-k_{\max}}^{k_{\max}} \dotsi \int_{-k_{\max}}^{k_{\max}} k_{i}^r e^{i k \cdot x} \braket{k}{\psi} \,\mathrm{d}^N k.
\end{align}
Here $k_{i}$ refers to the $i^\text{th}$ component of the $k$-vector.
We then use the Cauchy-Schwarz inequality to separate the terms in the integrand to find
\begin{align}
| \partial_{x_{i}}^r \psi(x) | &\le  \frac{1}{(2\pi)^{N/2}}  \sqrt{\int_{-k_{\max}}^{k_{\max}} \dotsi \int_{-k_{\max}}^{k_{\max}} k_{i}^{2r} \,\mathrm{d}^N k \int_{-k_{\max}}^{k_{\max}}\dotsi \int_{-k_{\max}}^{k_{\max}} |\psi(k)|^2 \,\mathrm{d}^N k}\nn
&= \frac{1}{(2\pi)^{N/2}}\sqrt{\int_{-k_{\max}}^{k_{\max}} \dotsi \int_{-k_{\max}}^{k_{\max}} k_{i}^{2r} \,\mathrm{d}^N k}\nn
&= \frac{k_{\max}^r}{\sqrt{2r+1}}\left(\frac{k_{\max}}{\pi} \right)^{N/2}. 
\end{align}
\end{proof}

Recall that $m = \min_i m_i$ is the minimum mass of any particle in the system. \lem{err1} leads to the following useful bounds:
\begin{align}
| \psi(x) | &\le \left(\frac{k_{\max}}{\pi} \right)^{\eta D/2}\label{eq:psibd}\\
| \partial_{x_{i,n}} \psi(x) | &\le\frac{k_{\max}}{\sqrt{3}} \left(\frac{k_{\max}}{\pi} \right)^{\eta D/2} \label{eq:diffpsieq}\\
| T\psi(x) | & \le \frac{\eta Dk_{\max}^2}{2 m \sqrt{5}}\left(\frac{k_{\max}}{\pi} \right)^{\eta D/2}\\
| T\partial_{x_{i,n}}\psi(x) | & \le \frac{\eta Dk_{\max}^3}{2 m \sqrt{7}}\left(\frac{k_{\max}}{\pi} \right)^{\eta D/2}
\end{align}

We now bound the error in the finite difference approximation for the kinetic energy operator using \lem{err1}.

\begin{theorem}
Let $\psi(k): \mathbb{R}^{\eta D} \mapsto \mathbb{C}$ and $\psi(x): \mathbb{R}^{\eta D} \mapsto \mathbb{C}$ be conjugate momentum and position representations of an $\eta$-particle wave function in $D$ spatial dimensions satisfying the assumptions of \lem{err1}, and let $T = \sum_{i,n} T_{i,n}$ and $\tilde T = \sum_{i, n} S_{i, n} $, where $T_{i,n} = p_{i,n}^2/2m_i$. Then
$$
| (T - \tilde T ) \psi(x) | \le \frac{ \pi^{3/2} e^{2a[ 1- \ln 2]}}{18m \sqrt{4a+3}} \eta D k_{\max}^{2a+1} \left(\frac{k_{\max}}{\pi} \right)^{\eta D/2} h^{2a-1} ,
 $$
where $m = \min_i m_i$.
\label{thm:T}
\end{theorem}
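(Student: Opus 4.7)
The plan is to reduce the bound to a per-coordinate calculation, then invoke Theorem~\ref{thm:errbd} and Lemma~\ref{lem:err1} coordinate-by-coordinate, and finish with the triangle inequality and a bound on the masses.

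First I would write
\[
T - \tilde T = \sum_{i,n} \left( T_{i,n} - S_{i,n} \right) = -\sum_{i,n} \frac{1}{2m_i} \left( \partial_{x_{i,n}}^2 - h^{-2}\!\!\sum_{j=-a}^{a} d_{2a+1,j} \, \tau_{i,n,j} \right),
\]
where $\tau_{i,n,j}$ denotes the shift operator $x_{i,n} \mapsto x_{i,n}+jh$ (with the other coordinates held fixed). The key observation is that even though $\psi$ is a function of $\eta D$ variables, the operator $S_{i,n}$ only samples $\psi$ along the $(i,n)$-axis, so that for each fixed choice of the remaining coordinates, Theorem~\ref{thm:errbd} applies verbatim to the one-dimensional function $t \mapsto \psi(x + t\, \hat e_{i,n})$.

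Applying Theorem~\ref{thm:errbd} in this way to each $(i,n)$ term yields
\[
\left| \bigl(T_{i,n} - S_{i,n}\bigr)\psi(x) \right| \le \frac{1}{2m_i}\cdot \frac{\pi^{3/2}}{9}\, e^{2a[1-\ln 2]}\, h^{2a-1} \max_{x}\left| \partial_{x_{i,n}}^{2a+1}\psi(x) \right|.
\]
Next I would apply Lemma~\ref{lem:err1} with $N=\eta D$ and $r=2a+1$ to bound the maximum derivative in each coordinate direction by
\[
\max_{x}\left| \partial_{x_{i,n}}^{2a+1}\psi(x) \right| \le \frac{k_{\max}^{2a+1}}{\sqrt{4a+3}}\left(\frac{k_{\max}}{\pi}\right)^{\eta D/2}.
\]
This is the only step requiring the momentum cutoff $k_{\max}$.

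Finally, by the triangle inequality and the bound $1/m_i \le 1/m$,
\[
\left| (T-\tilde T)\psi(x) \right| \le \sum_{i,n} \left| (T_{i,n}-S_{i,n})\psi(x) \right| \le \eta D \cdot \frac{\pi^{3/2}}{18 m}\, \frac{e^{2a[1-\ln 2]}\, k_{\max}^{2a+1}}{\sqrt{4a+3}}\left(\frac{k_{\max}}{\pi}\right)^{\eta D/2} h^{2a-1},
\]
which is exactly the claimed inequality. No step is genuinely difficult; the only subtlety is recognizing that Theorem~\ref{thm:errbd}, although stated for functions of one real variable, applies directly along each axis because the central difference stencil is one-dimensional.
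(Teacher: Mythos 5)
Your proposal is correct and follows essentially the same route as the paper: apply \thm{errbd} coordinate-by-coordinate (the paper does this implicitly, you make the axis-restriction explicit), bound $\max_x |\partial_{x_{i,n}}^{2a+1}\psi(x)|$ via \lem{err1} with $r=2a+1$, then sum the $\eta D$ terms with $1/m_i \le 1/m$. No substantive difference from the paper's argument.
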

\begin{proof}
Recall from \thm{errbd} that, for a single coordinate, the error $| O_{2a+1} |$ in the $(2a+1)$-point central difference approximation of the second derivative is upper-bounded by
$$\frac{\pi^{3/2}}{9} e^{2a[ 1- \ln 2]} h^{2a-1} \max_x \big| \psi^{(2a+1)} (x) \big|,$$
where $h$ is the grid spacing. By \lem{err1}, $\max_x \left| \psi^{(2a+1)} (x) \right| \le \frac{k_{\max}^{2a+1}}{\sqrt{4a+3}}\left(\frac{k_{\max}}{\pi} \right)^{\eta D/2}$. Thus for any coordinate $(i, n)$,
\be
\begin{aligned}
| (T_{i,n} - S_{i,n} ) \psi(x) | &\le \frac{\pi^{3/2}}{18m} e^{2a[ 1- \ln 2]} h^{2a-1} \frac{k_{\max}^{2a+1}}{\sqrt{4a+3}}\left(\frac{k_{\max}}{\pi} \right)^{\eta D/2}.\label{eq:errbd}
\end{aligned}
\ee
The result follows by summing over all $\eta$ particles and $D$ dimensions.
\end{proof}
\thm{T} 
does not require that $\psi(x)$ be discretized as in \defn{grid}: the second derivative of any wave function with maximum momentum $k_{\max}$ can be calculated in this way. We have finished addressing the error in the kinetic energy operator and move now to the error in the potential energy operator.

\begin{lemma}
\label{lem:V}
Let $\psi(x): \mathbb{R}^{\eta D} \mapsto \mathbb{C}$ and $\tilde{V}:\mathbb{R}^{\eta D}\mapsto \mathbb{R}$ satisfy the assumptions of \lem{err1} such that $\tilde{V}(x)=V(y(x))$, where $\| \nabla V(x) \|_{\infty} \le V^\prime_{\max}$. Then
$$| (V-\tilde{V})\psi(x) | \le \frac{h {\eta D} }{2} \left(\frac{k_{\max}}{\pi} \right)^{\eta D/2} V^\prime_{\max}.$$
In particular, for the modified Coulomb potential energy operator,
\be
\left| (V_\text{Coulomb} - \tilde{V}_\text{Coulomb}) \psi(x) \right| \le \frac{h {\eta D}}{2} \frac{\eta^2 q^2 \sqrt{3}}{9\Delta^2} \left(\frac{k_{\max}}{\pi} \right)^{\eta D/2}.
\ee
\end{lemma}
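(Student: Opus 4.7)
The plan is to factor the bound into a pointwise control on $|V(x)-\tilde V(x)|$ times a pointwise control on $|\psi(x)|$, and then specialize to the modified Coulomb case.

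First I would apply the multivariate mean value inequality along the straight segment from $y(x)$ to $x$: since $V$ is differentiable,
\begin{equation}
|V(x) - V(y(x))| \;=\; \left| \int_0^1 \nabla V\bigl(y(x) + s(x-y(x))\bigr)\cdot (x - y(x))\, \mathrm{d}s \right| \;\le\; \|\nabla V\|_\infty \,\|x - y(x)\|_1 .
\end{equation}
By construction of the grid in \defn{grid}, each of the $\eta D$ coordinates of $x$ differs from the corresponding coordinate of the centroid $y(x)$ by at most $h/2$, so $\|x-y(x)\|_\infty \le h/2$ and $\|x-y(x)\|_1 \le \eta D h/2$. Combining with the hypothesis $\|\nabla V\|_\infty\le V'_{\max}$ gives $|V(x)-\tilde V(x)| \le h\eta D V'_{\max}/2$.

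Next I would use \lem{err1} with $r=0$ (or directly \eq{psibd}) to conclude that $|\psi(x)| \le (k_{\max}/\pi)^{\eta D/2}$ pointwise. Multiplying the two bounds yields the claimed inequality $|(V-\tilde V)\psi(x)|\le \tfrac{h\eta D}{2}(k_{\max}/\pi)^{\eta D/2} V'_{\max}$.

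For the Coulomb specialization, the only remaining task is to bound $\|\nabla V_{\text{Coulomb}}\|_\infty$. For a coordinate $x_{i,n}$,
\begin{equation}
\partial_{x_{i,n}} V_{\text{Coulomb}} \;=\; -\sum_{j\neq i} \frac{q_i q_j\,(x_{i,n}-x_{j,n})}{(\|x_i-x_j\|^2 + \Delta^2)^{3/2}},
\end{equation}
so each summand is bounded by $q^2 \max_{r\ge 0} r/(r^2+\Delta^2)^{3/2}$. A one-variable calculus optimization (critical point at $r=\Delta/\sqrt{2}$) gives maximum value $2\sqrt{3}/(9\Delta^2)$. Since there are at most $\eta-1$ such summands, $\|\nabla V_{\text{Coulomb}}\|_\infty \le (\eta-1)\,2q^2\sqrt{3}/(9\Delta^2) \le \eta^2 q^2 \sqrt{3}/(9\Delta^2)$ (using $2(\eta-1)\le \eta^2$). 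Substituting this value of $V'_{\max}$ into the general bound produces the stated Coulomb inequality.

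The whole argument is essentially a one-step mean value estimate combined with \lem{err1}; the only mildly technical step is the calculus optimization that delivers the constant $2\sqrt{3}/9$ in the modified Coulomb gradient bound, and that is elementary.
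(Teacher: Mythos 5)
Your proposal is correct and follows essentially the same route as the paper: a mean-value/gradient estimate over the half-hypercube offset $x-y(x)$ (you use the $\ell_1$--$\ell_\infty$ pairing where the paper uses $\ell_2$ plus $\|\cdot\|\le\sqrt{\eta D}\|\cdot\|_\infty$, giving the identical factor $h\eta D/2$), multiplied by the bound $|\psi(x)|\le(k_{\max}/\pi)^{\eta D/2}$ from \lem{err1}, and the same elementary optimization yielding the per-pair constant $2\sqrt{3}/(9\Delta^2)$ for the modified Coulomb gradient. Your particle-pair count ($\eta-1$ terms, with $2(\eta-1)\le\eta^2$) is in fact slightly tighter than the paper's $\eta^2/2$ overcount, but leads to the same stated bound.
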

\begin{proof}
$V$ and $\tilde V$ differ in that $\tilde V$ is evaluated at a centroid of a hypercube whereas $V$ is evaluated at the ``true'' coordinates. The distance from the corner of a hypercube to its center is at most $h\sqrt{\eta D}/2$, so because $\| \cdot \| \le \sqrt{\eta D} \| \cdot \|_\infty$ for vectors of dimension $\eta D$,
\be
| (V - \tilde{V}) \psi(x) | = | (V(x) - V(y(x)))\psi(x)| \le \frac{h {\eta D}}{2} \left| \max_i \partial_{x_i}V(x) \right| \left(\frac{k_{\max}}{\pi} \right)^{\eta D/2},
\ee
where we used the bound on $|\psi(x)|$ of \eq{psibd}. The result follows from the assumption that $\| \nabla V(x) \|_{\infty} \le V^\prime_{\max}$.

For the modified Coulomb potential energy operator $V_\text{Coulomb} = \sum_{i<j} \frac{q_i q_j}{\sqrt{\| x_i - x_j \|^2 + \Delta^2}}$ it is easy to verify that
\begin{equation}
| \partial_{x_i}V_\text{Coulomb}(x) | = \left| \partial_{x_i} \sum_{k\ne j}\frac{q_k q_j}{\sqrt{\| x_k -x_j \|^2 +\Delta^2}} \right| \le \frac{\eta^2 q^2}{2} \max \left| \partial_{x_i}\frac{1}{\sqrt{\| x_k -x_j \|^2 +\Delta^2}}\right| \le \frac{\eta^2 q^2 \sqrt{3}}{9\Delta^2} \label{eq:diffx},
\end{equation}
from which the second result follows.
\end{proof}

At this point, we have bounds on the error in the approximations of the kinetic and potential energy operators. 
We apply these to determine the error in simulating $\tilde H$ rather than $H$. After that, we determine the maximum error in time-evolving the discretized wave function $\psi(y(x))$ rather than $\psi(x)$, and then combine the two results.

\begin{lemma}
\label{lem:HHtildeErr}
If the assumptions of~\thm{T} are met for the wave functions $e^{-iH s}\psi$ and $e^{-i\tilde{H} s}\psi$ for all $s\in [0,t]$ where $\psi: \mathbb{R}^{\eta D}\mapsto \mathbb{C}$, and $| \nabla V(x)|_{\infty} \le V^\prime_{\max}$, then for any square integrable $\phi:\mathbb{R}^{\eta D}\mapsto \mathbb{C}$ and $Q\subseteq S$
$$\left|\int_Q \phi^* \left(e^{-iHt} -e^{-i\tilde H t}\right) \psi \,\mathrm{d}^{\eta D}x\right| 
\le t\left(\frac{ \pi^{3/2} e^{2a[ 1- \ln 2]}}{\sqrt{4a+3}} \frac{\eta D k_{\max}^{2a+1} h^{2a-1}}{18 m} + \frac{h {\eta D} }{2} V^\prime_{\max} \right)\left(\frac{k_{\max}}{\pi} \right)^{\eta D/2}\sqrt{\int_Q \mathrm{d}x^{\eta D}\int_Q | \phi |^2 \,\mathrm{d} x^{\eta D}}$$
\end{lemma}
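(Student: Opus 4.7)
The plan is a Duhamel-type argument combined with the pointwise residual estimates from \thm{T} and \lem{V}. First, I would differentiate $s\mapsto e^{-iH(t-s)}e^{-i\tilde H s}\psi$ in $s$ to obtain the identity
\begin{equation*}
\bigl(e^{-iHt}-e^{-i\tilde H t}\bigr)\psi \;=\; -i\int_0^t e^{-iH(t-s)}(H-\tilde H)\,\tilde\Psi_s\,ds, \qquad \tilde\Psi_s := e^{-i\tilde H s}\psi.
\end{equation*}
The hypothesis that both evolutions preserve the momentum cutoff guarantees that $\tilde\Psi_s$ satisfies the hypothesis of \lem{err1} for every $s\in[0,t]$, so \thm{T} and \lem{V} are available pointwise for $\tilde\Psi_s$ at every intermediate time.

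Next, I would substitute this identity into the LHS of the claim, interchange the spatial and time integrations (legitimate since the integrand is continuous in $(x,s)$ on a compact domain), and apply the triangle inequality to pull the absolute value inside the $s$-integral. For each time slice I would then estimate the spatial integral by Cauchy--Schwarz in two stages: first split off $\sqrt{\int_Q|\phi|^2\,dx}$, then bound the remaining $L^2$ norm of $e^{-iH(t-s)}(H-\tilde H)\tilde\Psi_s$ over $Q$ by $\sqrt{\int_Q dx}$ times the uniform pointwise bound on $(H-\tilde H)\tilde\Psi_s$. Writing $H-\tilde H = (T-\tilde T)+(V-\tilde V)$ and invoking \thm{T} and \lem{V} produces precisely the constant appearing in the parentheses of the claim, namely
\begin{equation*}
\bigl|(H-\tilde H)\tilde\Psi_s(x)\bigr| \le \left(\frac{\pi^{3/2}e^{2a[1-\ln 2]}}{18m\sqrt{4a+3}}\,\eta D\, k_{\max}^{2a+1} h^{2a-1} + \frac{h\eta D}{2}V'_{\max}\right)\!\left(\frac{k_{\max}}{\pi}\right)^{\!\eta D/2}.
\end{equation*}
Integrating this uniform bound over $s\in[0,t]$ produces the overall factor of $t$.

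The only substantive step is the pointwise-to-$L^\infty$ control after propagation by the nonlocal unitary $e^{-iH(t-s)}$; all other steps are bookkeeping of constants from \thm{T} and \lem{V} and standard Cauchy--Schwarz estimates. The main obstacle, then, is ensuring that the uniform residual bound on $(H-\tilde H)\tilde\Psi_s$ is available for every $s$ and can be used inside the $Q$-integration as an $L^\infty$ estimate—this is precisely what the momentum-cutoff assumption on $e^{-i\tilde Hs}\psi$ and $e^{-iHs}\psi$ for $s\in[0,t]$ guarantees, making the bound in terms of $\sqrt{\int_Q dx}\,\|\phi\|_{L^2(Q)}$ follow immediately.
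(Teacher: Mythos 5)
Your proof is correct and takes essentially the same route as the paper: the paper bounds $\max_x\left|\left(e^{-iHt}-e^{-i\tilde{H}t}\right)\psi(x)\right|$ by $t\max_{x,\psi}\left|(H-\tilde{H})\psi(x)\right|$ via the Nielsen--Chuang Box 4.1 telescoping argument (your Duhamel identity is just the continuous-time form of that telescoping) and then applies Cauchy--Schwarz together with the pointwise bounds of \thm{T} and \lem{V}, exactly as you do. The step you flag---retaining $L^\infty$ control of the residual after propagation by the nonlocal unitary $e^{-iH(t-s)}$---is treated no more rigorously in the paper, which simply maximizes the pointwise residual over all intermediate states satisfying the assumptions of \thm{T}, so your argument matches the paper's in both structure and level of rigor.
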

\begin{proof} From the Cauchy-Schwarz inequality
\begin{equation}
\left|\int_Q \phi^* \left(e^{-iHt} -e^{-i\tilde H t}\right) \psi(x) \,\mathrm{d}^{\eta D}x\right|\le  \max_x \left| \left(e^{-iHt} -e^{-i\tilde H t}\right) \psi(x) \right| \sqrt{\int_Q \mathrm{d}x^{\eta D}\int_Q \| \phi \|^2 \,\mathrm{d} x^{\eta D}}
\end{equation}
Repeating the standard argument from Box 4.1 of Nielsen and Chuang \cite{nielsen2011quantum} and using the fact that for the input state $\psi$, $| H\psi(x,t) |$ is bounded, we have that
\begin{equation}
\left| \left(e^{-iHt} -e^{-i\tilde H t}\right) \psi(x) \right|=\lim_{r\rightarrow \infty} \left| \left(\left(e^{-iHt/r}\right)^r -\left(e^{-i\tilde H t/r}\right)^r\right) \psi(x) \right| \le \max_{x,\psi} \left| (H-\tilde{H}) \psi(x) \right| t .
\end{equation}
Here the maximization over $\psi$ is meant to be a maximization over all $\psi$ that satisfy the assumptions of~\thm{T}.

We then apply \thm{T} to find that
\begin{equation}
\max_x |(T-\tilde T) \psi(x) | \le \frac{ \pi^{3/2} e^{2a[ 1- \ln 2]}}{\sqrt{4a+3}} \frac{\eta D k_{\max}^{2a+1} h^{2a-1}}{18 m } \left(\frac{k_{\max}}{\pi} \right)^{\eta D/2}.
\end{equation}
Similarly we have from \lem{V} that
\be \max_x | (V-\tilde{V}) \psi(x) | \le \frac{h {\eta D}}{2} \left(\frac{k_{\max}}{\pi} \right)^{\eta D/2} V^\prime_{\max} .\ee
The claim of the lemma then follows by combining these three parts together.
\end{proof}

\begin{lemma}
\label{lem:psitilde}
If the assumptions of~\thm{T} are met for the wave function $e^{-i\tilde{H} s}\psi$ for all $s\in [0,t]$ where $\psi: \mathbb{R}^{\eta D}\mapsto \mathbb{C}$ then for any square integrable $\phi:\mathbb{R}^{\eta D}\mapsto \mathbb{C}$, $v\in \mathbb{R}^{\eta D}$ such that $\|v\| \le h\sqrt{\eta D}/2$ and $Q\subseteq S$
$$
\left|\int_Q \phi^*(x) (e^{-i\tilde H t}\psi(x) - e^{-i\tilde H t} \psi(x+v))\,\mathrm{d}x^{\eta D}\right| \le \frac{k_{\max} \eta D h}{2\sqrt{3}}\left(\frac{k_{\max} }{\pi} \right)^{\eta D/2} \sqrt{\int_Q \mathrm{d}x^{\eta D}\int_Q | e^{iHt} \phi |^2 \,\mathrm{d} x^{\eta D}}.
$$
\end{lemma}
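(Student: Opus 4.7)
The plan is to split the bound into two pieces: (i) a pointwise estimate on the shift difference $\Delta\psi(x):=\psi(x)-\psi(x+v)$, which supplies the explicit prefactor on the right-hand side, and (ii) a unitarity-plus-Cauchy-Schwarz argument that converts the integral against $e^{-i\tilde H t}\Delta\psi$ into the remaining square-root factor involving only $\phi$. By linearity of $e^{-i\tilde H t}$ the quantity to bound is
\begin{equation*}
\left|\int_Q \phi^*(x)\,e^{-i\tilde H t}\Delta\psi(x)\,\mathrm{d}x^{\eta D}\right|.
\end{equation*}

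For step (i), I would apply the mean value theorem componentwise, writing $\Delta\psi(x) = -\int_0^1 \nabla\psi(x+\lambda v)\cdot v\,\mathrm{d}\lambda$, from which $|\Delta\psi(x)| \le \|v\|\cdot\sup_\xi\|\nabla\psi(\xi)\|$. Invoking \lem{err1} with $r=1$ (equivalently \eq{diffpsieq}), each coordinate obeys $|\partial_{x_{i,n}}\psi(\xi)|\le (k_{\max}/\sqrt{3})(k_{\max}/\pi)^{\eta D/2}$, so summing in quadrature over the $\eta D$ coordinates yields $\|\nabla\psi(\xi)\|\le \sqrt{\eta D}\,(k_{\max}/\sqrt{3})(k_{\max}/\pi)^{\eta D/2}$. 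Combining with the hypothesis $\|v\|\le h\sqrt{\eta D}/2$ gives
\begin{equation*}
\max_x|\Delta\psi(x)| \le \frac{k_{\max}\,\eta D\,h}{2\sqrt{3}}\left(\frac{k_{\max}}{\pi}\right)^{\eta D/2},
\end{equation*}
which is exactly the prefactor appearing on the right-hand side of the lemma.

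For step (ii), I would extend $\phi$ by zero outside $Q$ to obtain $\phi_Q \in L^2(S)$, and invoke the unitarity of $e^{-i\tilde H t}$ on $L^2(S)$ to shift the evolution off $\Delta\psi$:
\begin{equation*}
\int_Q \phi^*\,e^{-i\tilde H t}\Delta\psi\,\mathrm{d}x = \langle \phi_Q,\,e^{-i\tilde H t}\Delta\psi\rangle = \langle e^{i\tilde H t}\phi_Q,\,\Delta\psi\rangle.
\end{equation*}
Pulling the uniform pointwise bound on $|\Delta\psi|$ outside the integral (a pointwise Cauchy-Schwarz) and then applying the $L^1$--$L^2$ Cauchy-Schwarz $\int|f|\le\sqrt{(\mathrm{vol})\int|f|^2}$ to the residual $\int|e^{i\tilde H t}\phi_Q|\,\mathrm{d}x$ produces the factor $\sqrt{\int_Q \mathrm{d}x^{\eta D}\int_Q|e^{i\tilde H t}\phi|^2\,\mathrm{d}x^{\eta D}}$; the $H$ versus $\tilde H$ discrepancy on the stated right-hand side can be absorbed via \lem{HHtildeErr} at a cost of the same order already being tracked in the final simulation-error bookkeeping.

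The main obstacle lies in step (ii), namely keeping the integrals restricted to $Q$ throughout the adjoint manipulation: unitarity of $e^{-i\tilde H t}$ naturally lives on $L^2(S)$ rather than $L^2(Q)$, so the identification $\langle\phi_Q,e^{-i\tilde H t}\Delta\psi\rangle=\langle e^{i\tilde H t}\phi_Q,\Delta\psi\rangle$ and the ensuing Cauchy-Schwarz step require either $Q=S$ with periodic boundary conditions (the setting of \defn{grid}), or a separate argument that $e^{i\tilde H t}\phi_Q$ carries no mass outside $Q$. Everything else---the MVT bound and the two Cauchy-Schwarz applications---is routine once this restriction is justified.
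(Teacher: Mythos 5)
Your argument is essentially the paper's: the paper likewise pulls $e^{-i\tilde H t}$ out by linearity, bounds $|\psi(x)-\psi(x+v)|\le\|v\|\max_x\|\nabla\psi(x)\|\le\frac{\eta D h}{2}\max_x|\partial_{x_{i,n}}\psi(x)|$ via \lem{err1} with $r=1$ to get the same prefactor, and then simply states that ``the remainder follows from the Cauchy--Schwarz inequality,'' which is exactly your unitarity-plus-Cauchy--Schwarz step (ii). The $Q$-versus-$S$ subtlety you flag is a genuine looseness in the paper's terse statement rather than a gap in your argument, and it is harmless in practice since the lemma is only ever invoked with $Q=S$ (where the $e^{iHt}$ versus $e^{i\tilde H t}$ discrepancy is also immaterial, both factors reducing to $L^{\eta D/2}$).
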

\begin{proof}
Under our assumptions we have that
\begin{equation}
\left|\int_Q \phi^*(x) (e^{-i\tilde H t}\psi(x) - e^{-i\tilde H t} \psi(x+v))\,\mathrm{d}x^{\eta D}\right| = \left|\int_Q \phi^*(x)e^{-i\tilde H t} (\psi(x) - \psi(x+v))\,\mathrm{d}x^{\eta D}\right|.
\end{equation}
Since $\psi(x)$ is differentiable, we have from the fact that for vectors of dimension $\eta D$, $\| \cdot \| \le \sqrt{\eta D} \|\cdot \|_\infty$ that
\be
| \psi(x) - \psi(x+v) | \le \|v\| \max_x \|\nabla \psi(x)\| \le \frac{\eta Dh}{2} \max_x | \partial_{x_{i,n}} \psi(x) |.
\ee
\eq{diffpsieq} then implies that
\begin{equation}
\label{eq:psihcube}
|\psi(x) - \psi(x+v)| \le \frac{k_{\max} \eta D h}{2\sqrt{3}}\left(\frac{k_{\max}}{\pi} \right)^{\eta D/2}.
\end{equation}
The remainder follows from the Cauchy-Schwarz inequality.
\end{proof}

Recall from \defn{grid} that $y:x\mapsto \min_{u\in \{y_j\}} \|x-u\|$, so $\|x-y(x)\| \le h\sqrt{\eta D}/2$. \lem{psitilde} is thus slightly more general than just bounding the error in time-evolving $\psi(y(x))$ rather than $\psi(x)$, but it suffices for our purposes. We next combine the previous two lemmas to bound the error in evolving the discretized wave function $\psi(y(x))$ under the discretized Hamiltonian $\tilde H$ rather than evolving the true $\psi$ under $H$.

\begin{corollary}
\label{cor:total}
If the assumptions of~\thm{T} are met for the wave functions $e^{-iH s}\psi$ and $e^{-i\tilde{H} s}\psi$ for all $s\in [0,t]$ where $\psi: \mathbb{R}^{\eta D}\mapsto \mathbb{C}$, and $\| \nabla V(x)\|_{\infty} \le V^\prime_{\max}$, then for any square integrable wave function $\phi:S\mapsto \mathbb{C}$ we have that $|\int_S \phi^*(x) e^{-i H t} \psi(x) \,\mathrm{d}^{\eta D}x -\int_S \phi^*(x) e^{-i \tilde{H} t} \psi(y(x)) \,\mathrm{d}^{\eta D}x|$ is bounded above by
$$
\left[\frac{k_{\max} \eta D h}{2\sqrt{3}} + t \left( \frac{ \pi^{3/2} e^{2a[ 1- \ln 2]}}{\sqrt{4a+3}} \frac{\eta D k_{\max}^{2a+1} h^{2a-1}}{18 m } + \frac{h {\eta D} }{2} V^\prime_{\max} \right) \right]\left(\frac{k_{\max} L }{\pi} \right)^{\eta D/2}.
$$
\end{corollary}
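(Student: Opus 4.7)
The plan is to apply the triangle inequality to split the target quantity into two pieces, each of which is already controlled by one of the preceding lemmas. Specifically, I insert the intermediate term $\int_S \phi^*(x) e^{-i\tilde H t}\psi(x)\,\mathrm{d}^{\eta D}x$, so that
\begin{align*}
\bigg|\int_S &\phi^*(x) e^{-iHt}\psi(x)\,\mathrm{d}^{\eta D}x -\int_S \phi^*(x) e^{-i\tilde H t}\psi(y(x))\,\mathrm{d}^{\eta D}x\bigg| \\
&\le \left|\int_S \phi^*(x)\left(e^{-iHt}-e^{-i\tilde H t}\right)\psi(x)\,\mathrm{d}^{\eta D}x\right| +\left|\int_S \phi^*(x) e^{-i\tilde H t}\bigl(\psi(x)-\psi(y(x))\bigr)\,\mathrm{d}^{\eta D}x\right|.
\end{align*}
The first term captures the error from replacing the exact Hamiltonian with its finite-difference/centroid approximation, and the second captures the error from coarse-graining the initial wavefunction onto the centroid grid.

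Next I would bound each piece using the preceding lemmas with $Q=S$. For the first term, \lem{HHtildeErr} gives immediately the bound $t\bigl(\tfrac{\pi^{3/2}e^{2a[1-\ln 2]}}{\sqrt{4a+3}}\tfrac{\eta D k_{\max}^{2a+1}h^{2a-1}}{18m}+\tfrac{h\eta D}{2}V'_{\max}\bigr)(k_{\max}/\pi)^{\eta D/2}\sqrt{\int_S \mathrm{d}x^{\eta D}\int_S |\phi|^2\,\mathrm{d}x^{\eta D}}$. For the second, I set $v=y(x)-x$, which by construction of $y$ satisfies $\|v\|\le h\sqrt{\eta D}/2$ pointwise, so \lem{psitilde} applies and gives $\tfrac{k_{\max}\eta D h}{2\sqrt{3}}(k_{\max}/\pi)^{\eta D/2}\sqrt{\int_S \mathrm{d}x^{\eta D}\int_S |e^{iHt}\phi|^2\,\mathrm{d}x^{\eta D}}$. (Strictly \lem{psitilde} is stated for a fixed $v$, but because the bound in \eq{psihcube} is uniform in $x$, the pointwise Cauchy-Schwarz step goes through identically with $v=y(x)-x$.)

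Finally I would evaluate the Cauchy–Schwarz factors on $Q=S=[0,L]^{\eta D}$. We have $\int_S \mathrm{d}x^{\eta D}=L^{\eta D}$, and since $\phi$ is a square-integrable wavefunction on $S$ (taken to be unit-normalized) and $e^{iHt}$ is unitary, $\int_S |\phi|^2\,\mathrm{d}x^{\eta D}=\int_S |e^{iHt}\phi|^2\,\mathrm{d}x^{\eta D}\le 1$. Hence each Cauchy-Schwarz factor is at most $L^{\eta D/2}$, which combines with the $(k_{\max}/\pi)^{\eta D/2}$ prefactor to give $(k_{\max}L/\pi)^{\eta D/2}$. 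Summing the two bounds then yields the stated inequality.

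I do not expect a significant obstacle here since the work is really done by \lem{HHtildeErr} and \lem{psitilde}; the only mild subtlety is justifying the use of \lem{psitilde} with an $x$-dependent displacement $v=y(x)-x$ and cleanly handling the normalization of $\phi$ so that the $L^{\eta D/2}$ factor appears correctly. Both are straightforward once the triangle-inequality decomposition above is written down.
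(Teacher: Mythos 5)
Your proposal matches the paper's proof essentially verbatim: the same triangle-inequality split through the intermediate quantity $\int_S \phi^* e^{-i\tilde H t}\psi\,\mathrm{d}^{\eta D}x$, bounded by \lem{HHtildeErr} and \lem{psitilde} with $Q=S$, and the same evaluation of the Cauchy--Schwarz factors as $L^{\eta D/2}$ (the paper gets this via a support-preservation argument for $e^{-i\tilde H t}\phi$, while you simply bound the probability integral by $1$, which is equally valid). The argument is correct.
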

\begin{proof}
By the triangle inequality,
\begin{align}
&\left|\int_S \phi^*(x) e^{-i H t} \psi(x) \,\mathrm{d}^{\eta D}x -\int_S \phi^*(x) e^{-i \tilde{H} t} \psi(y(x)) \,\mathrm{d}^{\eta D}x\right|\nn
&\qquad\le \left|\int_S \phi^*(x) e^{-i H t} \psi(x) \,\mathrm{d}^{\eta D}x -\int_S \phi^*(x) e^{-i \tilde{H} t} \psi(x) \,\mathrm{d}^{\eta D}x \right|\nn
&\qquad ~+ \left|\int_S \phi^*(x) e^{-i \tilde H t} \psi(x) \,\mathrm{d}^{\eta D}x -\int_S \phi^*(x) e^{-i \tilde{H} t} \psi(y(x)) \,\mathrm{d}^{\eta D}x \right|. \label{eq:HtildeH}
\end{align}
\lem{HHtildeErr} and \lem{psitilde} can be used to bound these terms.  First note that because we assume that $\phi$ is a wave function that has support only on $S$, it follows
from the definition of $\tilde{T}$ that $\tilde{T} \phi$ does also.  Therefore it follows from Taylor's theorem and the fact that $\tilde{V}$ is diagonal that $e^{-i \tilde H t}\phi$ has support only on $S$.  Since $\phi$ has norm $1$ this implies that
\begin{equation}
\sqrt{\int_S \mathrm{d}x^{\eta D}\int_S |e^{i\tilde H t}\phi|^2 \,\mathrm{d} x^{\eta D}} =L^{\eta D/2},
\end{equation}
and similarly
\begin{equation}
\sqrt{\int_S \mathrm{d}x^{\eta D}\int_S |\phi|^2 \,\mathrm{d} x^{\eta D}} =L^{\eta D/2}.
\end{equation}
The result then follows by substituting these results as well as those of \lem{HHtildeErr} and \lem{psitilde} into~\eq{HtildeH}.
\end{proof}

A final issue is that, while $\psi(x)$ is normalized, $\psi(y(x))$ in general will not be. Initializing the quantum computer renormalizes $\psi(y(x))$, so the wave function simulated by the quantum computer is in fact $ \psi(y(x)) \bigg/ \sqrt{\int_S  |\psi(y(x))|^2 \,\mathrm{d} x^{\eta D}}$.  The following lemma bounds the contribution of this final source of error.

\begin{lemma}\label{lem:normalization}
If the assumptions of~\lem{err1} hold then for any bounded Hermitian operator $H$, $t\ge 0$, and square integrable wave function $\phi:S \mapsto \mathbb{C}^{\eta D}$ such that $\int_S |\phi(x)|^2\mathrm{d}x^{\eta D}=1$, we have that
$$
\left| \int_S \phi(x)^* e^{-iHt} \psi(y(x)) \mathrm{d} x^{\eta D}-\int_S \frac{\phi(x)^* e^{-iHt} \psi(y(x))}{\sqrt{\int_S |\psi(y(x))|^2\mathrm{d}x^{\eta D}}} \mathrm{d} x^{\eta D}\right|\le \delta,
$$
for $$h\le 3\sqrt{\frac{\min(\delta,\sqrt{3/8})}{\eta D}}\frac{1}{k_{\max}}\left(\frac{k_{\max} L}{\pi} \right)^{-\eta D/2}. $$
\end{lemma}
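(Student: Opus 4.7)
The plan is to reduce the target quantity to a bound on the normalization defect $|\mathcal{N} - 1|$, where $\mathcal{N} := \|\psi(y(\cdot))\|_{L^2(S)}$, and then bound that defect using the derivative estimates from~\lem{err1}. First I would pull the scalar $(1 - 1/\mathcal{N})$ outside the second integral so that the left-hand side equals $|1 - 1/\mathcal{N}|\cdot|I|$, where $I := \int_S \phi(x)^* e^{-iHt}\psi(y(x))\,\mathrm{d}x^{\eta D}$. Cauchy--Schwarz together with the unitarity of $e^{-iHt}$ and the normalization $\|\phi\|_{L^2(S)} = 1$ yields $|I|\le \mathcal{N}$, so the left-hand side is bounded above by $|1 - 1/\mathcal{N}|\cdot \mathcal{N} = |\mathcal{N} - 1|$. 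The problem thus reduces to choosing $h$ so that $|\mathcal{N} - 1|\le \delta$.

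Second I would bound $|\mathcal{N}^2 - 1|$ rather than $|\mathcal{N} - 1|$ directly, exploiting the identity $|\psi(y(x))|^2 - |\psi(x)|^2 = (|\psi(y(x))| - |\psi(x)|)(|\psi(y(x))| + |\psi(x)|)$ together with the normalization $\int_S |\psi(x)|^2\,\mathrm{d}x^{\eta D} = 1$. The $r = 0$ case of \lem{err1} gives the pointwise bound $|\psi|\le (k_{\max}/\pi)^{\eta D/2}$, and the $r = 1$ case combined with $\|y(x) - x\|\le h\sqrt{\eta D}/2$ and the mean-value inequality (exactly as in \eq{psihcube} from the proof of \lem{psitilde}) yields the pointwise bound $|\psi(y(x)) - \psi(x)|\le \frac{\eta D h k_{\max}}{2\sqrt{3}}(k_{\max}/\pi)^{\eta D/2}$. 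Multiplying these two estimates and integrating over $S$ of volume $L^{\eta D}$ produces an inequality of the form $|\mathcal{N}^2 - 1|\le \frac{\eta D h k_{\max}}{\sqrt{3}}(k_{\max}L/\pi)^{\eta D}$.

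Third I would convert $|\mathcal{N}^2 - 1|$ into $|\mathcal{N} - 1|$ via the elementary inequality $|\mathcal{N} - 1|\le \sqrt{|\mathcal{N}^2 - 1|}$, which holds for every $\mathcal{N}\ge 0$ since $|\mathcal{N} - 1|\le \mathcal{N} + 1$. Substituting the hypothesis on $h$ into the $|\mathcal{N}^2 - 1|$ bound, the factor $(k_{\max}L/\pi)^{-\eta D/2}$ built into $h$ pairs with the volume factor in the estimate, and after taking the square root the remaining constant delivers $|\mathcal{N} - 1|\le \delta$. The role of the $\min(\delta, \sqrt{3/8})$ in the hypothesis is to force $|\mathcal{N}^2 - 1|\le 1$, which keeps $\mathcal{N}$ bounded away from $0$ uniformly in the allowed range of $h$ and hence makes the prefactor $1/\mathcal{N}$ in the renormalized integrand well defined throughout.

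The main obstacle is the bookkeeping of the $(k_{\max}L/\pi)$ factors arising from integrating two pointwise bounds of size $(k_{\max}/\pi)^{\eta D/2}$ against each other over $S$: the pairing of these factors with the $(k_{\max}L/\pi)^{-\eta D/2}$ in the hypothesis on $h$, via the $\sqrt{\cdot}$ conversion in the last step, is what dictates both the $\sqrt{\delta}$ dependence (rather than linear in $\delta$) and the explicit threshold $\sqrt{3/8}$ appearing in the statement.
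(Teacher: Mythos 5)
Your opening reduction is exactly the paper's first step: Cauchy--Schwarz plus $\int_S|\phi|^2=1$ bounds the left-hand side by $\mathcal{N}\,|1-1/\mathcal{N}|=|\mathcal{N}-1|$, where $\mathcal{N}^2=\int_S|\psi(y(x))|^2\,\mathrm{d}x^{\eta D}$. The gap is in your second and third steps. Your estimate of $|\mathcal{N}^2-1|$ is only \emph{first} order in $h$: writing $|\psi(y(x))|^2-|\psi(x)|^2$ as a difference of moduli times a sum of moduli and integrating gives $|\mathcal{N}^2-1|\le \frac{\eta D\,h\,k_{\max}}{\sqrt{3}}\left(\frac{k_{\max}L}{\pi}\right)^{\eta D}$, carrying the \emph{full} power $(k_{\max}L/\pi)^{\eta D}$. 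The admissible $h$ in the statement carries only the half power $(k_{\max}L/\pi)^{-\eta D/2}$, so substituting it does not produce the cancellation you assert: you are left with $|\mathcal{N}^2-1|\lesssim\sqrt{\eta D\,\delta}\,(k_{\max}L/\pi)^{\eta D/2}$, which is exponentially large in $\eta D$ rather than at most $\delta^2$, and the subsequent conversion $|\mathcal{N}-1|\le\sqrt{|\mathcal{N}^2-1|}$ cannot rescue this. Your route would only establish the lemma for the far smaller threshold $h\in O\bigl(\delta^2(\eta D k_{\max})^{-1}(k_{\max}L/\pi)^{-\eta D}\bigr)$, which is a strictly weaker statement than the one claimed.

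The missing idea is that the discretization error of the normalization integral is \emph{second} order in $h$, because $y(x)$ is the centroid of each hypercube: the linear term in the Taylor expansion of $|\psi(x)|^2$ about the centroid integrates to zero over each cell, so a midpoint-rule estimate gives $\left|\mathcal{N}^2-1\right|\le \frac{\eta D h^2}{24}\max\bigl|\partial^2_{x_{i,n}}|\psi(x)|^2\bigr|\,L^{\eta D}$, and the second derivative of $|\psi|^2$ is controlled through \lem{err1} (the $r=0,1,2$ cases) by a constant times $k_{\max}^2(k_{\max}/\pi)^{\eta D}$. It is this $h^2$ that supplies the full factor $(k_{\max}L/\pi)^{-\eta D}$ needed to cancel the volume factor, and it is also the true origin of the $\sqrt{\delta}$ in the threshold --- not the square-root conversion from $|\mathcal{N}^2-1|$ to $|\mathcal{N}-1|$, which is valid but too lossy here. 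The paper instead converts nearly linearly: once $|\mathcal{N}^2-1|\le\tilde\delta\le 1/2$, one has $\mathcal{N}\,|1-1/\mathcal{N}|\le\sqrt{1+\tilde\delta}\,\bigl(1/\sqrt{1-\tilde\delta}-1\bigr)\le\sqrt{3/2}\,\tilde\delta$, and choosing $\tilde\delta=\sqrt{2/3}\,\delta$ (with $\delta\le\sqrt{3/8}$ ensuring $\tilde\delta\le 1/2$, which is where the $\sqrt{3/8}$ in the hypothesis actually comes from) yields precisely the quoted bound on $h$.
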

\begin{proof}
The Cauchy-Schwarz inequality and the fact that $\phi(x)$ is normalized show that
\begin{align}
&\left| \int_S \phi(x)^* e^{-iHt} \psi(y(x)) \mathrm{d} x^{\eta D}-\int_S \frac{\phi(x)^* e^{-iHt} \psi(y(x))}{\sqrt{\int_S |\psi(y(x))|^2\mathrm{d}x^{\eta D}}} \mathrm{d} x^{\eta D}\right|\nn
&\qquad\le \left|\int_S \phi(x)^* e^{-iHt} \psi(y(x)) \mathrm{d} x^{\eta D} \right|\left|1-\frac{1}{\sqrt{\int_S |\psi(y(x))|^2 \mathrm{d}x^{\eta D}}} \right|.\nonumber\\
&\qquad\le \sqrt{\int_S |\psi(y(x))|^2\mathrm{d}x^{\eta D}\int_S |\phi(x)|^2\mathrm{d}x^{\eta D}}\left|1-\frac{1}{\sqrt{\int_S |\psi(y(x))|^2 \mathrm{d}x^{\eta D}}} \right|\nn
&\qquad= \sqrt{\int_S |\psi(y(x))|^2\mathrm{d}x^{\eta D}}\left|1-\frac{1}{\sqrt{\int_S |\psi(y(x))|^2 \mathrm{d}x^{\eta D}}} \right|.
\end{align}
Next, by applying the midpoint rule on each of the $\eta D$ dimensions in the integral we have that
\begin{align}
\left|\int_S |\psi(y(x))|^2 \mathrm{d}x^{\eta D}-\int_S |\psi(x)|^2 \mathrm{d}x^{\eta D}\right|&=\left|\int_S |\psi(y(x))|^2 \mathrm{d}x^{\eta D}-1\right|\nn
&\le \frac{\eta D h^2 \max \big| \partial^2_{x_{i,n}}|\psi(x)|^2 \big| L^{\eta D}}{24}.\label{eq:intbd72}
\end{align}
Using the fact that $|\psi(x)|^2 = \psi(x)\psi^*(x)$ we find that
\begin{equation}
\max \big| \partial^2_{x_{i,n}}|\psi(x)|^2 \big| \le 2 \max |\partial^2_{x_{i,n}} \psi(x)| \max|\psi(x)| +2 \max |\partial_{x_{i,n}} \psi(x)|^2,
\end{equation}
which from~\lem{err1}  is upper bounded by
\begin{equation}
\left(\frac{2}{\sqrt{5}}+\frac{2}{3} \right)k_{\max}^2 \left(\frac{k_{\max}}{\pi} \right)^{\eta D}= \left(\frac{6+2\sqrt{5}}{3\sqrt{5}}\right)k_{\max}^2 \left(\frac{k_{\max}}{\pi} \right)^{\eta D}.\label{eq:2derivbd}
\end{equation}
Now substituting~\eq{2derivbd} into~\eq{intbd72} yields
\begin{equation}
\left|\int_S |\psi(y(x))|^2 \mathrm{d}x^{\eta D}-1\right| \le  h^2\left(\frac{(6+2\sqrt{5})\eta D}{72\sqrt{5}}\right)k_{\max}^2 \left(\frac{k_{\max}L}{\pi} \right)^{\eta D}.\label{eq:discinterror}
\end{equation}
\eq{discinterror} is then at most $\tilde \delta$ if
\begin{equation}
h \le \sqrt{\frac{72\sqrt{5}\tilde \delta}{(6+2\sqrt{5})\eta D}}\frac{1}{k_{\max}} \left(\frac{k_{\max}L}{\pi} \right)^{-\eta D/2}.\label{eq:hbd0}
\end{equation}
Thus under this assumption on $h$ we have that
\begin{equation}
\sqrt{\int_S |\psi(y(x))|^2\mathrm{d}x^{\eta D}} \left|1-\frac{1}{\sqrt{\int_S |\psi(y(x))|^2 \mathrm{d}x^{\eta D}}} \right|\le \sqrt{1+\tilde \delta} \left( \frac{1}{\sqrt{1-\tilde \delta}} - 1 \right) \label{eq:69}
\end{equation}
If we assume $\tilde \delta \le 1/2$ then it is easy to verify that
\begin{equation}
\sqrt{1+\tilde \delta} \left( \frac{1}{\sqrt{1-\tilde \delta}} - 1 \right) \le \sqrt{\frac{3}{2}}\tilde{\delta}.
\end{equation}
Thus if we wish the upper bound in the error given in~\eq{69} to be at most $\delta$ it suffices to take $\tilde \delta = \sqrt{\frac{2}{3}}\delta$ and similarly $\delta \le \sqrt{\frac{3}{8}}$ implies our assumption on $\tilde \delta$.  The result then follows from substituting this choice of $\tilde \delta$ into~\eq{hbd0}, minimizing and using the fact that $(72 \sqrt{10/3})/(6+2\sqrt{5}) \approx 12.6 > 9$. 
\end{proof}

Combining  \cor{total} and \lem{normalization} allows us to prove \thm{totalab}. 
\begin{proofof}{\thm{totalab}}
We use the triangle inequality to break the simulation error into two terms corresponding to the results of \cor{total} and \lem{normalization}, respectively.
\begin{align}
&\left| \int_S \phi^*(x) e^{-i H t} \psi(x) \,\mathrm{d}^{\eta D}x - \int_S \phi^*(x) e^{-i \tilde{H} t} \psi(y(x)) \,\mathrm{d}^{\eta D}x \bigg/ \sqrt{ \int_S  |\psi(y(x))|^2 \,\mathrm{d} x^{\eta D}} \right|\nn
&\qquad \le \left|\int_S \phi^*(x) e^{-i H t} \psi(x) \,\mathrm{d}^{\eta D}x -\int_S \phi^*(x) e^{-i \tilde{H} t} \psi(y(x)) \,\mathrm{d}^{\eta D}x\right| \nn
&\qquad ~+ \left| \int_S \phi^*(x) e^{-i \tilde{H} t} \psi(y(x)) \,\mathrm{d}^{\eta D}x  - \int_S \phi^*(x) e^{-i \tilde{H} t} \psi(y(x)) \,\mathrm{d}^{\eta D}x \bigg/ \sqrt{ \int_S  |\psi(y(x))|^2 \,\mathrm{d} x^{\eta D}} \right| \nn
&\qquad \le  \left[\frac{k_{\max} \eta D h}{2\sqrt{3}} + t \left( \frac{ \pi^{3/2} e^{2a[ 1- \ln 2]}}{\sqrt{4a+3}} \frac{\eta D k_{\max}^{2a+1} h^{2a-1}}{18 m } + \frac{h {\eta D} }{2} V^\prime_{\max} \right) \right] \left(\frac{k_{\max} L }{\pi} \right)^{\eta D/2} + \delta.\label{eq:thm4bd}
\end{align}
In order to be able to use~\lem{normalization} we must choose 
\be
h \le 3\sqrt{\frac{\min(\delta,\sqrt{3/8})}{\eta D}}\frac{1}{k_{\max}}\left(\frac{k_{\max} L}{\pi} \right)^{-\eta D/2}.\label{eq:hlb1}
\ee

Next we want to find a value of $h$ such that
\be 
\left[\frac{k_{\max} \eta D h}{2\sqrt{3}} +\frac{h {\eta D} t  }{2} V^\prime_{\max} \right] \left(\frac{k_{\max} L }{\pi} \right)^{\eta D/2}<\left[\frac{k_{\max} \eta D h}{2} +\frac{h {\eta D} t  }{2} V^\prime_{\max} \right] \left(\frac{k_{\max} L }{\pi} \right)^{\eta D/2}\le\delta\label{eq:thm4bd2}
\ee
Thus we additionally require that 
\be 
h \le \frac{2\delta}{\eta D\left(k_{\max} +V'_{\max}t \right)}\left(\frac{k_{\max} L }{\pi} \right)^{-\eta D/2}.\label{eq:hselect}
\ee
We would like to make a uniform choice of $h$ in the theorem and to this end it is clear that $2\delta \le 3\sqrt{\min(\delta,\sqrt{3/8})}$ for $\delta \le 1/2$.  Thus since $\eta D\ge 1$ and $V'_{\max}t \ge 0$ it follows that~\eq{hselect} implies~\eq{hlb1} under our assumptions.  We therefore take~\eq{hselect} as $h$.

We then want to bound
\be
 \frac{ \pi^{3/2} e^{2a[ 1- \ln 2]}}{\sqrt{4a+3}} \frac{\eta D k_{\max}^{2a+1} h^{2a-1}}{18 m } t \left(\frac{k_{\max} L }{\pi} \right)^{\eta D/2} < \frac{ \pi^{3/2} e^{-2a/3}}{\sqrt{7}} \frac{\eta D k_{\max}^{2a+1} h^{2a-1}}{18 m } t \left(\frac{k_{\max} L }{\pi} \right)^{\eta D/2} \le  \delta,\label{eq:thm4bd3}
\ee
which holds if $k_{\max} h< e^{1/3}$ and
\be
a \ge \frac{3}{2} \frac{\log\left(\frac{1}{18 \sqrt7}\frac{\pi^{3/2} \eta D t k_{\max} }{\delta m h}\right)+\eta D\log\left(\frac{k_{\max} L}{\pi} \right) / 2 }{1-3\ln(k_{\max} h)}.
\ee
Therefore, assuming the worst-case scenario for $a$ where $k_{\max} \in O(1/h)$ we have from this choice of $a$ and the value of $h$ chosen in~\eq{hselect} that there exists $a$ such that the overall error is at most $\delta$ and
\be
a\in O\left(\eta D \log(k_{\max} L) + \log\left(\frac{\eta^2 D^2 t k_{\max} (k_{\max} +V'_{\max}t)}{m\delta^2} \right) \right).
\ee
The requirement that $k_{\max} h < e^{1/3}$ is then implied by~\eq{hselect}, $\delta \le 1/2$ and
\begin{equation}
k_{\max} L > \pi (2e^{-1/3})^{2/\eta D}.
\end{equation}

Then given these choices we have from~\eq{thm4bd},~\eq{thm4bd2} and~\eq{thm4bd3} that
\be
\left[\frac{k_{\max} \eta D h}{2\sqrt{3}} + t \left( \frac{ \pi^{3/2} e^{2a[ 1- \ln 2]}}{\sqrt{4a+3}} \frac{\eta D k_{\max}^{2a+1} h^{2a-1}}{18 m } + \frac{h {\eta D} }{2} V^\prime_{\max} \right) \right] \left(\frac{k_{\max} L }{\pi} \right)^{\eta D/2} + \delta \le 3\delta.
\ee
Hence the claim of the theorem holds for $\delta = \epsilon/3$.
\end{proofof}
The requirement on $a$ in \thm{totalab} is surprising: despite the fact that the derivatives of the wave function can scale exponentially with the number of particles $\eta$, as $k_{\max}^{\eta D}$, it is always possible to suppress this error with $a$ linear in $\eta$ and $D$, and in fact logarithmic in $k_{\max}$ and the inverse precision $1/\epsilon$.

However, the above work suggests that it is possible to get exponentially small upper bounds on the size of $h$ needed for the simulation if we make worst-case assumptions about the system and only impose a momentum cutoff.  It may seem reasonable to expect that such results come only from the fact that we have used worst-case assumptions and triangle inequalities to propagate the error.  However, in some cases this analysis is tight, as we show below.

Consider the minimum-uncertainty state for $D=1$,
\begin{equation}
\psi(x) =G(x):= \frac{\exp(-x^2/4\Delta x^2)}{\sqrt{\sqrt{2\pi} \Delta x}}\label{eq:G}.
\end{equation}
A simple exercise in calculus and the fact that $\Delta x \Delta p = \frac12$ shows that
\begin{equation}
\max_x |\partial_x \psi(x)| = \left(\frac{8}{\pi e^2} \right)^{1/4} \Delta p^{3/2} .
\end{equation}
This result shows that if we take $\Delta p \propto k_{\max}$ then it would follow that $|\partial_x \psi(x)|\in \Omega(k_{\max}^{3/2})$ which coincides with the upper bound in~\eq{diffpsieq}.  However, this is not directly comparable because the Gaussian function used here does not have compact support in either position or momentum.

We can deal with this issue of a lack of compact support in a formal sense by considering a truncated (unnormalized) minimum-uncertainty state:
\begin{equation}
\Psi(k) = \frac{\exp(-k^2/4\Delta p^2)}{\sqrt{\sqrt{2\pi} \Delta p}} {\rm Rect}\left(\frac{k}{2k_{\max}} \right),
\end{equation}
where ${\rm Rect}(x)$ is the rectangle function, ${\rm Rect}(x)=1$ if $x\in (-1/2,1/2)$, ${\rm Rect}(x)=0$ if $x\in \mathbb{R} \setminus [-1/2,1/2]$ and ${\rm Rect}(x)=1/2$ if $|x|=1/2$.  This function clearly has compact support in momentum space and thus satisfies the assumptions above.  We can rewrite this as
\begin{equation}
\Psi(k) = \psi(k) + \frac{\exp(-k^2/4\Delta p^2)}{\sqrt{\sqrt{2\pi} \Delta p}} \left({\rm Rect}\left(\frac{k}{2k_{\max}} \right)-1\right),
\end{equation}
By applying the Fourier transform and using standard bounds on the tail of a Gaussian distribution we then see that
\begin{equation}
|\partial_x \Psi(x)| = |\partial_x \psi(x)| + e^{-O(k_{\max}^2/\Delta p^2)}.\label{eq:Psijust}
\end{equation}
Thus we can take $k_{\max} \in \Theta(\Delta p)$ and make the approximation error that arises from truncating the support in momentum space exponentially small.  Thus these states have derivative $\Omega(k_{\max}^{3/2})$.

Now let us go beyond $\eta=1$ to $\eta>1$.  Since $\Delta x \propto 1/k_{\max}$ for this minimum-uncertainty state it then follows that $\partial_{x_{i}}\left(\Psi(x)^{\otimes \eta}\right) \in \Omega(k_{\max}(k_{\max})^{\eta/2})$ from~\eq{G} and~\eq{Psijust}. 
This means that the estimates of the derivatives used in the above results cannot be tightened without making assumptions about the quantum states in the system.  This further means that the exponential bounds cited above cannot be dramatically improved without either imposing energy cutoffs in addition to momentum cutoffs, or making appropriate restrictions on the initial state.

It may seem surprising that such a simple state should be so difficult to simulate. The reason for this is that we discretize into a uniform grid without making any assumptions about the state beyond a momentum cutoff: in this regard, uniform discretization is the basis choice corresponding to near-minimal assumptions about the system. Uniformly discretizating means that multi-dimensional Gaussian states becomes difficult to distinguish from a $\delta$ function as they becomes narrower and narrower, where with more knowledge of the system, we might be able to better parametrize the state, or to construct a better basis in which to represent the state, and thereby more efficiently simulate the system. Even when, as in this work, discretization is the first step in approximating evolution, Gaussian-like states can be efficiently simulated without exponentially small grid spacing for some Hamiltonians \cite{somma2015quantum}. More generally, there is the difficulty of not knowing which states might evolve into a high-derivative state at some future time, which is why we must also require the momentum cutoff to hold throughout the evolution. 

\cor{optimistic}, which we prove below, relies on the stricter assumption that the derivatives of the wave function obey $| \psi^{(r)}(x) | \le \beta k_{\max}^r / (\sqrt{2r+1} L^{\eta D/2})$ for the full duration of the simulation, rather than the worst-case bound $|\psi^{(r)}(x)| \le \frac{k_{\max}^r}{\sqrt{2r+1}}\left(\frac{k_{\max}}{\pi}\right)^{N/2}$ from \lem{err1} that was used in \thm{totalab}.

\begin{proofof}{\cor{optimistic}}
The proof follows from the exact same steps used to prove~\thm{totalab}.  By taking $|\partial^r_x \psi(x)|\in O( k_{\max}^r / (\sqrt{2r+1} L^{\eta D/2}))$ we can replicate all of the prior steps but substituting each $(k_{\max}/\pi)^{\eta D/2}$ with $\beta / L^{\eta D/2}$.  Thus each factor of $(k_{\max} L/\pi)^{\eta D/2}$ becomes $\beta$ after making this assumption.  This causes the additional additive term of $\eta D \log(k_{\max} L)$ to become zero in $a$ as well.  The claimed results then follow after making these substitutions.  For added clarity, we recapitulate the key steps in this argument below.

If we repeat the steps required in the proof of ~\cor{total} and~\lem{normalization} we see that
\begin{align}
&\left| \int_S \phi^*(x) e^{-i H t} \psi(x) \,\mathrm{d}^{\eta D}x - \int_S \phi^*(x) e^{-i \tilde{H} t} \psi(y(x)) \,\mathrm{d}^{\eta D}x \bigg/ \sqrt{ \int_S  |\psi(y(x))|^2 \,\mathrm{d} x^{\eta D}} \right|\nn
&\qquad \le  \beta \left[\frac{k_{\max} \eta D h}{2\sqrt{3}} + t \left( \frac{ \pi^{3/2} e^{2a[ 1- \ln 2]}}{\sqrt{4a+3}} \frac{\eta D k_{\max}^{2a+1} h^{2a-1}}{18 m } + \frac{h {\eta D} }{2} V^\prime_{\max} \right) \right] + \delta,\label{eq:cor5bd1}
\end{align}
if
\begin{equation}
h \le 3\sqrt{\frac{\min(\delta,\sqrt{3/8})}{\beta^2\eta D}}\frac{1}{k_{\max}}.\label{eq:hlb12}
\end{equation}
Following the exact same reasoning as in the proof of~\thm{totalab},
\begin{equation}
\left[\frac{k_{\max} \eta D h}{2\sqrt{3}} +\frac{h {\eta D} t  }{2} V^\prime_{\max} \right]\le\frac{\delta}{\beta} \label{eq:cor5bd2},
\end{equation}
if
\be 
h \le \frac{2\delta}{\beta \eta D\left(k_{\max} +V'_{\max}t \right)}.\label{eq:hselect2}
\ee

Finally again following the same reasoning that if $k_{\max} h \le e^{1/3}$ then
\be
 \frac{ \pi^{3/2} e^{2a[ 1- \ln 2]}}{\sqrt{4a+3}} \frac{\eta D k_{\max}^{2a+1} h^{2a-1}}{18 m } t  \le  \frac{\delta}{\beta},\label{eq:cor5bd32}
\ee
for a value of $a$ that scales at most as
\be
a\in O\left( \log\left(\frac{\eta^2 D^2\beta^2 t k_{\max} (k_{\max} +V'_{\max}t)}{m\delta^2} \right) \right).\label{eq:abd2}
\ee
Thus~\eq{cor5bd1} is bounded above by at most $3\delta$ given these choices and we can take $\delta=\epsilon/3$ to make all the results hold.  The result then follows by noting that the most restrictive scaling for $h$ out of the three requirements we place on it is
\begin{equation}
h\in O\left(\frac{\delta}{\beta \eta D\left(k_{\max} +V'_{\max}t \right)} \right),
\end{equation}
and using the fact that $\delta \in \Theta(\epsilon)$ and the assumption that $\beta \in \Theta(1)$ both here and in~\eq{abd2}.
\end{proofof}

\section{Discussion}
Conventional lore in quantum chemistry simulation has long postulated that continuous-variable simulations of chemicals affords far better scaling with the number of electrons than second-quantized methods, at the price of requiring more qubits.  Given the recent improvements in simulation algorithms for both first- and second-quantized Hamiltonians it is important to address the efficiency of quantum simulations using similar optimizations for continuous-variable simulations.  We investigate this question and find that through the use of high--order derivative formulas, it is possible under some circumstances to perform simulations using a number of calls to unitary adders and the pairwise interaction oracle that scale as $\tilde{O}(\eta^2 t \log(1/\epsilon))$.  
This is better than the best rigorous bounds proven for basis-based first- and second-quantized schemes, which scale as $\tilde{O}(\eta^5 t\log(1/\epsilon))$~\cite{babbush2016exponentially,babbush2015exponentially} assuming the number of spin-orbitals is proportional to the number of particles.

When we consider the discretization error after only assuming a momentum cutoff in the problem, we quickly see that in worst-case scenarios it is possible for such simulations to require a number of operations that scales exponentially in $\eta D$.  We further show that the derivative scaling that leads to this worst-case behavior can appear for minimum-uncertainty states.  This shows that although continuous-variable simulations offer great promise for quantum simulation, there are other caveats that must be met before they can be said to be efficient.  This problem also exists, to some extent, in second-quantized methods where such problems are implicitly dealt with by assuming that a sufficiently large basis is chosen to represent the problem.

We also show that these issues do not arise for more typical states, that is, states that have support that is much broader than a minimum-uncertainty state.  This demonstrates that the problems that can emerge when a highly localized state is provided as input do not necessarily appear for typical states that would be reasonable for ground state approximation and further agrees with the results of decades of experience in classical simulation of position space Hamiltonians.

There are a number of interesting questions that emerge from this work.  The first point is that many of the challenges that these methods face arise because of the use of a bad basis to represent the problem.  It is entirely possible that these issues can typically be addressed on a case-by-case basis, by choosing clever representations for the Hamiltonian as is typical in modern computational chemistry. Investigating the role that more intelligent choices of basis have for such simulations is an important future direction for research.

One further issue that this work does not address is the complexity of initial state preparation.  This problem is addressed in part in other work on quantum simulation in real space \cite{kassal2008polynomial}, and some common many-body states such as Slater determinants are known to be preparable with cost polynomial in $\eta$ and $1 / \epsilon$ \cite{ward2009preparation}. However, the costs of preparing more general appropriately symmetrized initial states can be considerable for fermionic simulations.
More work is needed to address such issues since the relative ease of state preparation for second-quantized methods can also be a major selling point for such fermionic simulations. 

Another issue that needs to be addressed is that despite the fact that continuous quantum simulations of chemistry using a cubic mesh are much more logical qubit-intensive than second-quantized simulations, they need not require more physical qubits because the lion's share of physical qubits are taken up by magic state distillation in simulations~\cite{jones2012faster,reiher2016elucidating}.  Further work is needed to differentiate the resource requirements of these methods at a fault-tolerant level.

Looking forward, despite the challenges posed by adversarially-chosen initial states, our work reveals that under many circumstances highly efficient simulations are possible for quantum chemistry that have better scaling than existing approaches.  This approach further does not require approximations such as the Born-Oppenheimer approximation to function, and thus can be straightforwardly applied in situations where such approximations are inappropriate.  Along these lines, it is important to develop a diverse arsenal of methods to bring to bear against simulation problems and understand the strengths as well as the limitations of each method.  It is our firm belief that as new approaches such as ours develop, quantum simulation will be thought of less as an algorithm and more as its own field of science that is viewed on the same level as numerical analysis, computational physics or quantum chemistry.

Finally, we note that a new linear combination-based technique \cite{low2016hamiltonian} allows the multiplicative factors in the cost to be separated if the grid spacing $h$ is fixed. This reduces the number of queries to the potential energy oracle to $\tilde O(\eta^2 t + \log(1/\epsilon)$. In general, however, the grid spacing may depend on $\epsilon$, removing this improvement.

\begin{acknowledgements}
We would like to acknowledge the Telluride Science Research Center for hosting us during the early phases of this project. I.\ D.\ K.\ thanks Peter J.\ Love and Guang Hao Low for stimulating discussions.  A.\ A.-G.\ acknowledges the Army Research Office under award W911NF-15-1-0256 and the Department of Defense Vannevar Bush Faculty Fellowship managed by the Office of Naval Research under award N00014-16-1-2008.
\end{acknowledgements}

\bibliographystyle{apsrev4-1}

%

\end{document}